\newcommand{\pagenumbaa}{1}
\theoremstyle{plain}
\newtheorem{mythm}{Theorem}
\newtheorem{myprop}[mythm]{Proposition}
\newtheorem{mycor}[mythm]{Corollary}
\newtheorem{mylem}[mythm]{Lemma}
\theoremstyle{definition}
\newtheorem{myquestion}{Question}
\newcommand{\norm}[1]{\left\lVert#1\right\rVert}
\def\setC{\mathsf{c}}
\newcommand{\Hc}[2]{H\!\left({#1}\!\left|{#2}\right.\right)} 
\newcommand{\I}[2]{I\!\left({#1};{#2}\right)} 
\newcommand{\Ic}[3]{I\!\left({#1};{#2}\!\left|{#3} \right. \right)} 
\newcommand{\KL}[2]{D\!\left(\! \left.\left.{#1}\right|\! \right|{#2}\right)} 
\newcommand{\SK}[3]{S_{\to}\!\left({#1};{#2}\!\left|{#3} \right. \right)} 
\newcommand{\W}{\mathsf{W}} 
\newcommand{\Prv}[1]{\,{\rm Pr}\!\left[#1\right]} 
\DeclareMathOperator{\enc}{enc}
\DeclareMathOperator{\dec}{dec}
\newcommand{\TD}[2]{\delta\!\left({#1},{#2}\right)} 
\newcommand{\markovDavid}{\small{\mbox{$-\hspace{-1.3mm} \circ \hspace{-1.3mm}-$}}}
\begin{document}

\title{Efficient One-Way Secret-Key Agreement and Private Channel Coding via Polarization}

 \author{David Sutter}
 \email[]{suttedav@phys.ethz.ch}
 
 \author{Joseph M.\ Renes}
 \email[]{renes@phys.ethz.ch}

 \author{Renato Renner}
 \email[]{renner@phys.ethz.ch}
 \affiliation{Institute for Theoretical Physics, ETH Zurich, Switzerland}

%




\begin{abstract}
We introduce explicit schemes based on the polarization phenomenon for the tasks of one-way secret key agreement from common randomness and private channel coding. 
For the former task, we show how to use common randomness and insecure one-way communication to obtain a strongly secure key such that the key construction has a complexity essentially linear in the blocklength and the rate at which the key is produced is optimal, i.e.,\ equal to the one-way secret-key rate. For the latter task, we present a private channel coding scheme that achieves the secrecy capacity using the condition of strong secrecy and whose encoding and decoding complexity are again essentially linear in the blocklength. 
\end{abstract}

 \maketitle 

 \setcounter{page}{\pagenumbaa}  
 \thispagestyle{plain}


\section{Introduction}\label{sec:intro}
Consider two parties, Alice and Bob, connected by an authentic but otherwise fully insecure communication channel. It has been shown that without having access to additional resources, it is impossible for them to carry out information-theoretically secure private communication \cite{shannon49,maurer93}. In particular they are unable to generate an unconditionally secure key with which to encrypt messages transmitted over the insecure channel. However, if Alice and Bob have access to correlated randomness about which an adversary (Eve) has only partial knowledge, the situation changes completely: information-theoretically secure secret-key agreement and private communication become possible.
Alternatively, if Alice and Bob are connected by a noisy discrete memoryless channel (DMC) to which Eve has only limited access---the so-called \emph{wiretap channel scenario} of 
Wyner \cite{wyner75}, Csisz\'ar and K\"orner \cite{csiszar78}, and Maurer \cite{maurer93}---private communication is again possible. 

In this paper, we present  explicit schemes for efficient one-way secret-key agreement from common randomness and for private channel coding in the wiretap channel scenario. Our schemes are based on \emph{polar codes}, a family of capacity-achieving linear codes, introduced by Ar{\i}kan \cite{arikan09}, that can be encoded and decoded efficiently.
Previous work by us in a quantum setup~\cite{concEntagDist} already implies that \emph{practically efficient} one-way secret-key agreement and private channel coding in a classical setup is possible, where a practically efficient scheme is one whose computational complexity is essentially linear in the blocklength. The aim of this paper is to explain the schemes in detail and give a purely classical proof that the schemes are reliable, secure, practically efficient and achieve optimal rates. Section~\ref{sec:background} introduces the problems of performing \emph{one-way secret-key agreement} and \emph{private channel coding}. We summarize known and new results about the optimal rates for these two problems for different wiretap channel scenarios. In Section~\ref{sec:ska}, we explain how to obtain one-way secret-key agreement that is practically efficient, strongly secure, reliable, and achieves the one-way secret-key rate. However, we are not able to give an efficient algorithm for code construction. Section~\ref{sec:pcc} introduces a similar scheme that can be used for strongly secure private channel coding at the secrecy capacity. Finally in Section~\ref{sec:discussion}, we state two open problems that are of interest in the setup of this paper as well as in the quantum mechanical scenario introduced in~\cite{concEntagDist}.

\section{Background and Contributions} \label{sec:background}
\subsection{Notation and Definitions}
\label{sec:defn}
Let $[k]=\left \lbrace 1,\ldots,k \right \rbrace$ for $k\in \mathbb{Z}^+$. For $x \in \mathbb{Z}_2^k$ and $\mathcal{I}\subseteq [k] $ we have $x[\mathcal{I}]=[x_i:i\in \mathcal{I}]$, $x^i=[x_1,\ldots,x_i]$ and $x_j^i=[x_j,\ldots,x_i]$ for $j<i$. The set $\mathcal{A}^{\setC}$ denotes the complement of the set $\mathcal{A}$. The uniform distribution on an arbitrary random variable $X$ is denoted by $\overline{P}_X$. For distributions $P$ and $Q$ over the same alphabet $\mathcal{X}$, the variational distance is defined by $\delta(P,Q):=\tfrac{1}{2}\sum_{x\in \mathcal{X}} \left| P(x)-Q(x) \right|$. 
The notation $X\markovDavid Y \markovDavid Z$ means that the random variables $X,Y,Z$ form a Markov chain in the given order. 

In this setup we consider a discrete memoryless wiretap channel (DM-WTC) $\W:\mathcal{X}\to \mathcal{Y}\times \mathcal{Z}$, which is characterized by its transition probability distribution $P_{Y,Z|X}$. We assume that the variable $X$ belongs to Alice, $Y$ to Bob and $Z$ to Eve.

According to K\"orner and Marton \cite{korner77}, a DM-WTC $\W:\mathcal{X}\to \mathcal{Y}\times \mathcal{Z}$ is termed \emph{more capable} if $\I{X}{Y} \geq \I{X}{Z}$ for every possible distribution on $X$. The channel $\W$ is termed \emph{less noisy} if $\I{U}{Y} \geq \I{U}{Z}$ for every possible distribution on $(U,X)$ where $U$ has finite support and $U\markovDavid X \markovDavid (Y,Z)$ form a Markov chain. If $X\markovDavid Y \markovDavid Z$ form a Markov chain, $\W$ is called \emph{degraded}. It has been shown \cite{korner77} that being more capable is a strictly weaker condition than being less noisy, which is a strictly weaker condition than being degraded. Hence, having a DM-WTC $\W$ which is degraded implies that $\W$ is less noisy, which again implies that $\W$ is also more capable.

\subsection{Polarization Phenomenon} \label{sec:polarization}
Let $X^N$ be a vector whose entries are i.i.d.\ Bernoulli($p$) distributed for $p\in[0,1]$ and  $N=2^n$ where $n\in \mathbb{Z}^+$. Then define $U^N = G_N X^N$, where $G_N$ denotes the polarization (or polar) transform which can be represented by the matrix
\begin{equation}
G_N := \begin{pmatrix}
1 & 1\\
0 & 1
\end{pmatrix}^{\!\! \otimes  \log N}, \label{eq:polarTrafo}
\end{equation}
where $A^{\otimes k}$ denotes the $k$th Kronecker power of an arbitrary matrix $A$. Furthermore, let $Y^N = \W^N X^N$, where $\W^N$ denotes $N$ independent uses of a DMC $\W:\mathcal{X}\to \mathcal{Y}$. For $\epsilon \in (0,1)$ we may define the two sets
\begin{align}
\mathcal{R}_{\epsilon}^N(X|Y)&:= \left \lbrace i \in[N]: \Hc{U_i}{U^{i-1},Y^N}\geq 1-\epsilon \right \rbrace \quad \textnormal{and} \label{eq:Rset}\\
\mathcal{D}_{\epsilon}^N(X|Y)&:= \left \lbrace i \in[N]: \Hc{U_i}{U^{i-1},Y^N}\leq \epsilon \right \rbrace. \label{eq:Dset}
\end{align}
The former consists of outputs $U_j$ which are essentially uniformly random, even given all previous outputs $U^{j-1}$ as well as $Y^N$, while the latter set consists of the essentially deterministic outputs. The polarization phenomenon is that essentially all outputs are in one of these two subsets, and their sizes are given by the conditional entropy of the input $X$ given $Y$.  

\begin{mythm}[Polarization Phenomenon \cite{arikan09,arikan10}] \label{thm:polPheno}
For any $\epsilon \in (0,1)$
\begin{equation}
\left| \mathcal{R}_\epsilon^N(X|Y) \right|= N \Hc{X}{Y} - o(N) \quad \textnormal{and} \quad \left| \mathcal{D}_\epsilon^N(X|Y) \right|=N \left(1-\Hc{X}{Y}\right) - o(N).
\end{equation}
\end{mythm}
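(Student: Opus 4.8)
The plan is to recast the quantities $\Hc{U_i}{U^{i-1},Y^N}$ as the conditional entropies of a tree-indexed family of synthetic channels and then run the martingale argument of Ar{\i}kan in the source-coding form of~\cite{arikan10}. Think of the pair $(X,Y)$ as a ``channel'' $W:\mathcal X\to\mathcal Y$. One step of the transform $\left(\begin{smallmatrix}1&1\\0&1\end{smallmatrix}\right)$ applied to two independent copies of $W$ produces two synthetic channels $W^0$ (the one seen by the first output bit) and $W^1$ (the one seen by the second output bit, given the first); iterating $n$ times, an index $i\in[N]$ is identified with a binary string $b_1\cdots b_n$ and the channel $U_i\mapsto(U^{i-1},Y^N)$ is exactly $W^{b_1\cdots b_n}$. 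The chain rule together with ``conditioning reduces entropy'', applied to the invertible $2\times2$ transform, gives for a single step the conservation law $\Hc{U_1'}{Y_1Y_2}+\Hc{U_2'}{U_1'Y_1Y_2}=2\,\Hc{X}{Y}$ and the ordering $\Hc{U_2'}{U_1'Y_1Y_2}\le \Hc{X}{Y}\le \Hc{U_1'}{Y_1Y_2}$.

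Next I would introduce i.i.d.\ uniform bits $B_1,B_2,\dots$ and set $H_n$ to be the conditional entropy of the random bit-channel $W^{B_1\cdots B_n}$, with $H_0=\Hc{X}{Y}$. The two displayed identities say precisely that $(H_n)_{n\ge0}$ is a martingale with values in $[0,1]$, so by the martingale convergence theorem it converges almost surely and in $L^1$ to some $H_\infty$ with $\mathbb E[H_\infty]=H_0=\Hc{X}{Y}$; in particular $\mathbb E|H_{n+1}-H_n|\to0$, i.e.\ the gap between the ``$+$'' and ``$-$'' branches of the recursion vanishes in the limit.

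The substantive step is to show $H_\infty\in\{0,1\}$ almost surely, i.e.\ that no intermediate entropy value is stable under the polarization step. For this I would pass to the \Bparam{} $Z(W)$, use the sandwich bounds $Z(W^1)=Z(W)^2$ and $Z(W)\le Z(W^0)\le 2Z(W)-Z(W)^2$, together with the two-sided comparisons between $H$ and $Z$ (so that $H\to0\Leftrightarrow Z\to0$ and $H\to1\Leftrightarrow Z\to1$). Vanishing of the branch gap then forces $Z_\infty\in\{0,1\}$, hence $H_\infty\in\{0,1\}$. Writing $p:=\Prv{H_\infty=1}$, the martingale identity $\mathbb E[H_\infty]=\Hc{X}{Y}$ yields $p=\Hc{X}{Y}$ and $\Prv{H_\infty=0}=1-\Hc{X}{Y}$. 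I expect this $\{0,1\}$-dichotomy --- the ``unstable intermediate value'' lemma --- to be the only genuine obstacle; everything else is bookkeeping.

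Finally I would translate back: since $\Hc{U_i}{U^{i-1},Y^N}$ is the value of $H_n$ at the leaf labelled $i$, a uniform choice of $i\in[N]$ has $\tfrac1N|\mathcal R_\epsilon^N(X|Y)|=\Prv{H_n\ge1-\epsilon}$ and $\tfrac1N|\mathcal D_\epsilon^N(X|Y)|=\Prv{H_n\le\epsilon}$. Because $H_n\to H_\infty$ almost surely and $H_\infty$ is $\{0,1\}$-valued (so $1-\epsilon$ and $\epsilon$ are continuity points of its law), these probabilities converge to $\Prv{H_\infty=1}=\Hc{X}{Y}$ and $\Prv{H_\infty=0}=1-\Hc{X}{Y}$ respectively. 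Multiplying by $N$ gives exactly $|\mathcal R_\epsilon^N(X|Y)|=N\Hc{X}{Y}-o(N)$ and $|\mathcal D_\epsilon^N(X|Y)|=N(1-\Hc{X}{Y})-o(N)$. A quantitative bound on the $o(N)$ term, if wanted, follows from a standard refinement of the martingale analysis, but the soft argument above already yields the stated form.
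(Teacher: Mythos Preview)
The paper does not prove this theorem; it is stated with attribution to Ar{\i}kan's original work \cite{arikan09,arikan10} and used as a black box. Your sketch is precisely the standard martingale argument from those references and is correct.
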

Based on this theorem it is possible to construct a family of linear error correcting codes, called \emph{polar codes}, that have several desirable attributes \cite{arikan09,sasoglu09,arikantelatar09,honda12}:\ they provably achieve the capacity of any DMC; they have an encoding and decoding complexity that is essentially linear in the blocklength $N$; the error probability decays exponentially in the square root of the blocklength. 

Correlated sequences of binary random variables may be polarized using a multilevel construction, as shown in \cite{sasoglu09}.\footnote{An alternative approach is given in \cite{abbe11_2,sahebi11}, where the polarization phenomenon has been generalized for arbitrary finite fields. We will however focus on the multilevel construction in this paper.} Given $M$ i.i.d.\ instances of a sequence $X=(X_{(1)},X_{(2)},\dots X_{(K)})$ and possibly a correlated random variable $Y$, the basic idea is to first polarize $X_{(1)}^M$ relative to $Y^M$, then treat $X_{(1)}^MY^M$ as side information in polarizing $X_{(2)}^M$, and so on. More precisely, defining $U^M_{(j)}=G_MX^M_{(j)}$ for $j=1,\dots,K$, we may define the random and deterministic sets for each $j$ as 
\begin{align}
	\mathcal R_{\epsilon,(j)}^M(X_{(j)}|X_{(j-1)},\cdots, X_{(1)},Y)&=\{i\in[M]: \Hc{U_{(j),i}}{U_{(j)}^{i-1},X_{(j-1)}^M,\cdots, X_{(1)}^M,Y^M}\geq 1-\epsilon\},\\
	\mathcal D_{\epsilon,(j)}^M(X_{(j)}|X_{(j-1)},\cdots, X_{(1)},Y)&=\{i\in[M]: \Hc{U_{(j),i}}{U_{(j)}^{i-1},X_{(j-1)}^M,\cdots, X_{(1)}^M,Y^M}\leq \epsilon\}.
\end{align}
In principle we could choose different $\epsilon$ parameters for each $j$, but this will not be necessary here. Now, Theorem~\ref{thm:polPheno} applies to the random and deterministic sets for every $j$. The sets $\mathcal R_{\epsilon}^{M}(X|Y)= \{ \mathcal R_{\epsilon,(j)}^M(X_{(j)}|X_{(j-1)}, \ldots,X_{(1)},Y) \}_{j=1}^K$ and $\mathcal D_{\epsilon}^{M}(X|Y)=\{ \mathcal D_{\epsilon,(j)}^M(X_{(j)}|X_{(j-1)},\ldots,X_{(1)},Y) \}_{j=1}^K$ have sizes given by  
\begin{align}
	|\mathcal{R}_{\epsilon}^{M}(X|Y)| &= \sum_{j=1}^{K} \left| \mathcal R_{\epsilon,(j)}^M(X_{(j)}|X_{(j-1)}, \ldots,X_{(1)},Y) \right|\\
	&=\sum_{j=1}^{K} M  \Hc{X_{(j)}}{X_{(1)},\dots,X_{(j-1)},Y}-o(M)\\
                                         &=M \Hc{X}{Y}-o(KM),
\end{align}
and 
\begin{align}
	|\mathcal{D}_{\epsilon}^{M}(X|Y)| &= \sum_{j=1}^{K} \left| \mathcal D_{\epsilon,(j)}^M(X_{(j)}|X_{(j-1)},\ldots,X_{(1)},Y) \right|\\
	&=\sum_{j=1}^{K} M\left(1-\Hc{X_{(j)}}{X_{(1)},\dots,X_{(j-1)},Y}\right)-o(M)\\
                                         &=M\left(K-\Hc{X}{Y}\right)-o(KM).
\end{align}
In the following we will make use of both the polarization phenomenon in its original form, Theorem~\ref{thm:polPheno}, and the multilevel extension. To simplify the presentation, we denote by $\widetilde G_{M}^K$ the $K$ parallel applications of $G_M$ to the $K$ random variables $X^M_{(j)}$.

\subsection{One-Way Secret-Key Agreement} \label{subsec:SKA}
At the start of the one-way secret-key agreement protocol, Alice, Bob, and Eve share $N=2^n$, $n\in\mathbb{Z}^+$ i.i.d.\ copies $(X^N,Y^N,Z^N)$ of a triple of correlated random variables $(X,Y,Z)$ which take values in discrete but otherwise arbitrary alphabets $\mathcal X$, $\mathcal Y$, $\mathcal Z$.\footnote{The correlation of the random variables $(X,Y,Z)$ is described by its joint probability distribution $P_{X,Y,Z}$.} 

Alice starts the protocol by performing an operation $\tau_A:\mathcal X^N\rightarrow(\mathcal S^J,\mathcal C)$ on $X^N$ which outputs both her secret key $S_A^{J}\in\mathcal{S}^{J}$ for $\mathcal{S}=\{0,1\}$ and an additional random variable $C\in\mathcal C$ which she transmits to Bob over an insecure but noiseless public channel. Bob then performs an operation $\tau_B:(\mathcal Y^N,\mathcal C)\rightarrow\mathcal S^{J}$ on $Y^N$ and the information $C$ he received from Alice to obtain a vector $S_B^{J}\in \mathcal{S}^J$; his secret key. The secret-key thus produced should be reliable, i.e.,\ satisfy the
\begin{equation}
\textnormal{reliability condition:}\quad \lim \limits_{N\to \infty}\Prv{S_A^{J}\ne S_B^{J}}=0,\label{eq:reliability}
\end{equation}
and secure, i.e.,\ satisfy the
\begin{equation}
\textnormal{(strong) secrecy condition:}\quad \lim_{N\rightarrow \infty} \norm{P_{S_A^{J}, Z^N, C}-\overline P_{S_A^{J}}\times P_{Z^N,C}}_1=0,
\label{eq:modern_secrecy}
\end{equation}
where $\overline{P}_X$ denotes the uniform distribution on random variable $X$.

Historically, secrecy was first characterized by a (weak) secrecy condition of the form
 \begin{equation}
 \lim \limits_{N \to \infty} \frac{1}{N}\I{S_A^{J}}{Z^N,C}=0. \label{eq:weak}
\end{equation}
Maurer and Wolf showed that \eqref{eq:weak} is not a sufficient secrecy criterion \cite{maurer94,maurer00} and introduced the strong secrecy condition
\begin{equation}
 \lim \limits_{N \to \infty} \I{S_A^{J}}{Z^N,C}=0, \label{eq:secrecy}
\end{equation}
where in addition it is required that the key is uniformly distributed, i.e.,
\begin{equation}
\lim \limits_{N\to\infty} \TD{P_{S_A^J}}{\overline P_{S_A^J}}=0.
\end{equation}
In recent years, the strong secrecy condition \eqref{eq:secrecy} has often been replaced by \eqref{eq:modern_secrecy}, since (half) the $L_1$ distance directly bounds the probability of distinguishing the actual key produced by the protocol with an ideal key. This operational interpretation is particularly helpful in the finite blocklength regime. In the limit $N\to \infty$, the two secrecy conditions \eqref{eq:modern_secrecy} and \eqref{eq:secrecy} are equivalent, which can be shown using Pinskser's and Fano's inequalities.

Since having weak secrecy is not sufficient, we will only consider strong secrecy in this paper. It has been proven that each secret-key agreement protocol which achieves weak secrecy can be transformed into a strongly secure protocol \cite{maurer00}. However, it is not clear whether the resulting protocol is guaranteed to be practically efficient.

For \emph{one-way} communication, Csisz\'ar and K\"orner \cite{csiszar78} and later Ahlswede and Csisz\'ar \cite{ahlswede93} showed that the optimal rate $R:=\lim_{N\rightarrow \infty}\frac {J}N$ of generating a secret key satisfying \eqref{eq:reliability} and \eqref{eq:secrecy}, called the \emph{secret-key rate} $\SK{X}{Y}{Z}$, is characterized by a closed single-letter formula.
\begin{mythm}[\cite{csiszar78,ahlswede93}] \label{thm:skaRate} 
For triples $(X,Y,Z)$ described by $P_{X,Y,Z}$ as explained above,
\begin{equation} 
\SK{X}{Y}{Z}=\left \lbrace\begin{array}{rl}
\max\limits_{P_{U,V}} & \Hc{U}{Z,V}-\Hc{U}{Y,V}\\
\mathrm{s.t.} & V\markovDavid U \markovDavid X \markovDavid (Y,Z), \\
& |\mathcal{V}| \leq |\mathcal{X}|, \, |\mathcal{U}| \leq |\mathcal{X}|^2.
\end{array} \right. \label{eq:highestRate}
\end{equation}
\end{mythm}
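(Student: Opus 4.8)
First I would note that Theorem~\ref{thm:skaRate} as stated is a converse-plus-achievability characterization, but in the context of this paper what is really needed is the achievability direction via polar codes, with the converse quoted from \cite{csiszar78,ahlswede93}. So I would split the proof accordingly. For the converse, I would simply cite the original works: the single-letter upper bound $\SK{X}{Y}{Z}\le \max_{P_{U,V}}[\Hc{U}{Z,V}-\Hc{U}{Y,V}]$ together with the cardinality bounds on $|\mathcal U|,|\mathcal V|$ follows from Csisz\'ar--K\"orner and Ahlswede--Csisz\'ar by a standard auxiliary-random-variable and support-lemma (Carath\'eodory) argument, and I would not reproduce it.

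For the achievability direction I would proceed constructively. Fix a maximizing pair $(U,V)$ with the Markov chain $V\markovDavid U\markovDavid X\markovDavid(Y,Z)$; without loss of generality $U,V$ are functions on the relevant alphabets, so that from the $N$ i.i.d.\ copies $X^N$ Alice can locally generate i.i.d.\ copies $(U^N,V^N)$ via a channel $P_{U,V|X}$. The target rate is $\Hc{U}{Z,V}-\Hc{U}{Y,V}$. Now I would apply the multilevel polarization construction of Section~\ref{sec:polarization} to the (in general non-binary) variable $U$: write $U$ in its binary expansion $U=(U_{(1)},\dots,U_{(K)})$ with $K=\lceil\log|\mathcal U|\rceil$, and consider the polar transforms $\widetilde G_M^K$ applied to $M=N/K$ blocks. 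This produces, for the side information $Y$ (and $V$) and separately for the side information $Z$ (and $V$), random and deterministic index sets $\Rset^M(U|V,Y)$, $\Dset^M(U|V,Y)$, $\Rset^M(U|V,Z)$, $\Dset^M(U|V,Z)$, whose sizes are controlled by Theorem~\ref{thm:polPheno} and its multilevel extension.

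The scheme itself: Alice computes the polarized string, announces over the public channel $C$ the bits in $\Dset^M(U|V,Y)$ (so that Bob, who holds $Y^N$ and $V^N$ — $V$ being a deterministic post-processing Alice can also send, or equivalently $V$ is reconstructed since $V\markovDavid U$ — can run the successive-cancellation decoder and recover all of $U^N$ with error probability $o(1)$), and extracts the key $S_A$ as the bits indexed by $\Rset^M(U|V,Z)\setminus\Dset^M(U|V,Y)$ (the "secure and not leaked" positions). Reliability \eqref{eq:reliability} is then immediate from the polar-code decoding guarantee. For the rate, a counting argument using Theorem~\ref{thm:polPheno} shows $\tfrac1N|\Rset^M(U|V,Z)\setminus\Dset^M(U|V,Y)|\to \Hc{U}{V,Z}-\Hc{U}{V,Y}$, using that the random set for $Z$ contains, up to $o(N)$, the complement of the deterministic set for $Y$ whenever $\Hc{U}{V,Z}\ge\Hc{U}{V,Y}$ (which holds at the optimum, else the rate is zero). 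For the strong secrecy condition \eqref{eq:modern_secrecy}, I would bound $\norm{P_{S_A,Z^N,C}-\overline P_{S_A}\times P_{Z^N,C}}_1$ via $\I{S_A}{Z^N,C}$ and Pinsker; since $C$ consists of the $\Dset^M(U|V,Y)$-bits and $S_A$ of $\Rset^M(U|V,Z)$-bits, the conditional entropy of each key bit given the Eve-side past (its own index prefix, $Z^N$, $V^N$, and $C$) is $\ge 1-\epsilon$ by definition of the random set — one has to check that conditioning additionally on $C$ and on the earlier key bits does not spoil this, which follows from the chain rule and the fact that $|C|+|S_A|$ is essentially $M$ so the accumulated entropy deficit is $o(N)\cdot\epsilon$-small; letting $\epsilon\to0$ slowly with $N$ gives vanishing mutual information.

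\textbf{Main obstacle.} The delicate point is the simultaneous control of the two polarizations — one relative to $Y$ for reliability, one relative to $Z$ for secrecy — on the \emph{same} realization $U^N$: one must ensure the decoding set (complement of $\Dset^M(U|V,Y)$, equivalently the union of $\Rset$ for $Y$ and the "unpolarized" remainder) and the key set $\Rset^M(U|V,Z)$ fit together so that every key bit is both (i) reconstructible-free from Eve's view and (ii) not part of what was publicly announced, with the leftover/ambiguous $o(N)$ positions (those polarized one way but not the other, or not polarized at all) provably negligible in both the rate and the secrecy bound. Handling these non-polarizing positions carefully — announcing them publicly or seeding them with pre-shared randomness of sublinear rate — and verifying the strong-secrecy estimate survives the extra conditioning on $C$ is where the real work lies; the degraded/less-noisy/more-capable hierarchy introduced in Section~\ref{sec:defn} is what guarantees the set inclusions go the right way in the special cases, while the general case needs the auxiliary $(U,V)$ pre-processing to reduce to a "more capable"-type situation.
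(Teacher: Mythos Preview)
The paper does not prove Theorem~\ref{thm:skaRate}; it is quoted from \cite{csiszar78,ahlswede93} as a known characterization, and the paper's own contribution is the explicit two-layer polar construction (Protocol~1 in Section~\ref{sec:ska}) that \emph{attains} this rate. Your proposal therefore goes well beyond what the paper does for this particular statement, and in doing so departs from the paper's actual construction in a way that leaves a real gap.

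First, a bookkeeping slip: for information reconciliation Alice must broadcast the bits in the \emph{complement} of $\Dset^M(U|V,Y)$ (the bits Bob cannot determine from $Y^N,V^N$), not the deterministic bits themselves; correspondingly the key should sit in $\Rset^M(U|V,Z)\cap\Dset^M(U|V,Y)$, not in $\Rset^M(U|V,Z)\setminus\Dset^M(U|V,Y)$. This is easily repaired.

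The substantive gap is the secrecy step. Membership in $\Rset^M(U|V,Z)$ guarantees only $\Hc{U_i}{U^{i-1},Z^N,V^N}\ge 1-\epsilon$; but your public message $C$ consists of bits $U_j$ at scattered positions, including positions $j>i$, and conditioning additionally on those \emph{future} bits is not covered by the defining inequality. Your assertion that ``conditioning additionally on $C$\dots does not spoil this, which follows from the chain rule'' is precisely the step that does not go through. Relatedly, the claim that ``the random set for $Z$ contains, up to $o(N)$, the complement of the deterministic set for $Y$ whenever $\Hc{U}{V,Z}\ge\Hc{U}{V,Y}$'' is a structural alignment statement about two different polarizations and does not follow from an entropy inequality alone; it is essentially a degradability-type hypothesis. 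Finally, choosing the \emph{optimal} auxiliary pair $(U,V)$ does not make the induced channel $P_{Y,Z|U}$ more capable, so the last sentence of your ``Main obstacle'' paragraph does not close the gap either.

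This is exactly why the paper uses a \emph{concatenated} two-layer scheme rather than the single-layer scheme you sketch: an inner polar transform of size $L$ performs information reconciliation (Alice sends $C=V^L[\mathcal E_K^{\setC}]$), and only afterwards is an outer polar transform of size $M$ applied to the surviving bits $T=V^L[\mathcal E_K]$, with the key taken from $\mathcal F_J=\Rset^M(T\mid C,Z^L)$. Because $C$ is now part of the \emph{side information} in the outer polarization, the secrecy bound in Theorem~\ref{thm:main} holds by definition of $\mathcal F_J$, with no need for any alignment between the $Y$- and $Z$-polarizations. The residual question of whether a single-layer alignment ever helps is left open in the paper as Question~1/1$'$ in Section~\ref{sec:discussion}.
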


The expression for the one-way secret-key rate given in Theorem~\ref{thm:skaRate} can be simplified if one makes additional assumptions about $P_{X,Y,Z}$.
\begin{mycor}
For $P_{X,Y,Z}$ such that the induced DM-WTC $\W$ described by $P_{Y,Z|X}$ is more capable,
\begin{equation} 
\SK{X}{Y}{Z}=\left \lbrace\begin{array}{rl}
\max\limits_{P_V} & \Hc{X}{Z,V}-\Hc{X}{Y,V}\\
\mathrm{s.t.} & V\markovDavid X \markovDavid (Y,Z),  \label{eq:skrMC}\\
& |\mathcal{V}| \leq |\mathcal{X}|.
\end{array} \right.
\end{equation}
\end{mycor}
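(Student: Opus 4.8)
The plan is to deduce the corollary directly from Theorem~\ref{thm:skaRate} by showing that, when the induced channel $\W$ is more capable, the inner maximization in \eqref{eq:highestRate} is never improved by a nontrivial preprocessing $U$, so that one may restrict to $U=X$; the auxiliary variable $U$ and its cardinality constraint then drop out and \eqref{eq:highestRate} becomes \eqref{eq:skrMC}. As a first step I would rewrite both objectives in mutual-information form: by the chain rule and the Markov chains involved, $\Hc{U}{Z,V}-\Hc{U}{Y,V}=\Ic{U}{Y}{V}-\Ic{U}{Z}{V}$ and $\Hc{X}{Z,V}-\Hc{X}{Y,V}=\Ic{X}{Y}{V}-\Ic{X}{Z}{V}$. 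Hence it suffices to prove, for every feasible $(P_{U,V})$ in \eqref{eq:highestRate}, that $\Ic{U}{Y}{V}-\Ic{U}{Z}{V}\le\Ic{X}{Y}{V}-\Ic{X}{Z}{V}$, the reverse estimate being immediate since taking $U=X$ makes $V\markovDavid U\markovDavid X$ automatic, respects $|\mathcal X|\le|\mathcal X|^2$, and attains equality.

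The core of the argument is a conditional-mutual-information inequality supplied by the more-capable hypothesis: for every Markov chain $U\markovDavid X\markovDavid(Y,Z)$ one has $\I{U}{Y}-\I{U}{Z}\le\I{X}{Y}-\I{X}{Z}$. I would prove it by noting that $\I{X}{Y}-\I{U}{Y}=\Ic{X}{Y}{U}$, since the complementary term $\Ic{U}{Y}{X}$ vanishes by the Markov property, and likewise with $Z$ in place of $Y$; the claimed inequality is therefore equivalent to $\Ic{X}{Y}{U}\ge\Ic{X}{Z}{U}$, i.e.\ $\sum_u P_U(u)\bigl(\Ic{X}{Y}{U=u}-\Ic{X}{Z}{U=u}\bigr)\ge0$, and each summand is nonnegative because, conditioned on $U=u$, the pair $(Y,Z)$ is still produced from $X$ (now with law $P_{X|U=u}$) through $\W$, and being more capable precisely means $\I{X}{Y}\ge\I{X}{Z}$ for every input distribution. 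I would then apply this estimate conditioned on each value $V=v$: the chain $V\markovDavid U\markovDavid X\markovDavid(Y,Z)$ guarantees that, after conditioning on $V=v$, the chain $U\markovDavid X\markovDavid(Y,Z)$ still holds and $(Y,Z)$ remains a $\W$-output of $X$, so the estimate gives $\Ic{U}{Y}{V=v}-\Ic{U}{Z}{V=v}\le\Ic{X}{Y}{V=v}-\Ic{X}{Z}{V=v}$; averaging over $v$ with weights $P_V(v)$ yields the desired bound. Finally, $V\markovDavid U\markovDavid X\markovDavid(Y,Z)$ implies $V\markovDavid X\markovDavid(Y,Z)$ and the constraint $|\mathcal V|\le|\mathcal X|$ is inherited, so $V$ alone is feasible for \eqref{eq:skrMC} with value at least that of $(P_{U,V})$; together with the converse direction this gives the asserted equality.

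I do not anticipate a genuine obstacle. The only points requiring care are bookkeeping ones: verifying that the Markov structure---and hence the applicability of the more-capable property to the conditional law $P_{X|V=v}$---really survives conditioning on $V=v$, and checking that the cardinality bounds match up in both directions so that eliminating $U$, or reinstating it as $X$, costs nothing.
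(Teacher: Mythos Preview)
Your proposal is correct and follows essentially the same route as the paper: both rewrite the objective as $\Ic{U}{Y}{V}-\Ic{U}{Z}{V}$, use the chain rule to reduce the comparison with $\Ic{X}{Y}{V}-\Ic{X}{Z}{V}$ to the nonnegativity of $\Ic{X}{Y}{U,V}-\Ic{X}{Z}{U,V}$, and invoke the more-capable hypothesis pointwise on the conditional input law. The only cosmetic difference is that the paper carries $V$ in the conditioning throughout a single chain of equalities, whereas you first fix $V=v$, prove the unconditional inequality, and then average---the content is identical.
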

\begin{proof}In terms of the mutual information, we have 
\begin{align}
	\Hc{U}{Z,V}-\Hc{U}{Y,V} &= \Ic{U}{Y}{V}-\Ic{U}{Z}{V}\\
	&=\Ic{X,U}{Y}{V}-\Ic{X,U}{Z}{V}-\left(\Ic{X}{Y}{U,V}-\Ic{X}{Z}{U,V}\right)\\
	&\leq \Ic{X,U}{Y}{V}-\Ic{X,U}{Z}{V}\\
	&=\Ic{X}{Y}{V}-\Ic{X}{Z}{V},
\end{align}
using the chain rule, the more capable condition, and the Markov chain properties, respectively. Thus, the maximum in $S_{\to}(X;Y|Z)$ can be achieved when omitting $U$. 
\end{proof}

\begin{mycor} \label{cor:LN_SKA}
For $P_{X,Y,Z}$ such that the induced DM-WTC $\W$ described by $P_{Y,Z|X}$ is less noisy,
\begin{equation} 
\SK{X}{Y}{Z}=  \Hc{X}{Z}-\Hc{X}{Y}. \label{eq:skaLessNoisy}
\end{equation}
\end{mycor}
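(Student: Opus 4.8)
The plan is to start from the preceding corollary, which applies here because a less noisy channel is in particular more capable (as recalled in Section~\ref{sec:defn}). That corollary already reduces $\SK{X}{Y}{Z}$ to $\max_{P_V}\bigl(\Hc{X}{Z,V}-\Hc{X}{Y,V}\bigr)$ over $V\markovDavid X \markovDavid (Y,Z)$ with $|\mathcal V|\le |\mathcal X|$, so it only remains to show that this maximization is superfluous: that for every such $V$ one has $\Hc{X}{Z,V}-\Hc{X}{Y,V}\leq \Hc{X}{Z}-\Hc{X}{Y}$, with equality attained by a constant $V$ (which trivially satisfies both the Markov condition and the cardinality bound).

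The crux is a short mutual-information identity. Writing the objective as $\Ic{X}{Y}{V}-\Ic{X}{Z}{V}$ and expanding $\I{X,V}{Y}$ and $\I{X,V}{Z}$ by the chain rule in the two possible orders, the Markov chain $V\markovDavid X \markovDavid (Y,Z)$ forces $\Ic{V}{Y}{X}=0$ and $\Ic{V}{Z}{X}=0$, leaving $\Ic{X}{Y}{V}=\I{X}{Y}-\I{V}{Y}$ and $\Ic{X}{Z}{V}=\I{X}{Z}-\I{V}{Z}$. Hence $\Ic{X}{Y}{V}-\Ic{X}{Z}{V} = \bigl(\I{X}{Y}-\I{X}{Z}\bigr)-\bigl(\I{V}{Y}-\I{V}{Z}\bigr)$.

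Now the less noisy hypothesis enters directly: taking $U=V$ in the definition of "less noisy'' (legitimate since $V\markovDavid X \markovDavid (Y,Z)$) gives $\I{V}{Y}\geq \I{V}{Z}$, so the bracketed correction is nonnegative and the desired inequality follows; equality clearly holds when $V$ is constant. Finally $\I{X}{Y}-\I{X}{Z}=\Hc{X}{Z}-\Hc{X}{Y}$, since the $\Hh{X}$ terms cancel, which gives \eqref{eq:skaLessNoisy}. I do not expect a genuine obstacle; the one point to handle carefully is the chain-rule bookkeeping that converts the conditional mutual informations $\Ic{X}{Y}{V}$ and $\Ic{X}{Z}{V}$ into the unconditioned quantities $\I{V}{Y}$ and $\I{V}{Z}$, since it is precisely that rewriting which makes the less noisy condition exactly the inequality one needs.
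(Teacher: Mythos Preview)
Your proposal is correct and follows essentially the same route as the paper: reduce via the more-capable corollary to $\max_{P_V}\bigl(\Ic{X}{Y}{V}-\Ic{X}{Z}{V}\bigr)$, use the chain rule together with the Markov condition $V\markovDavid X\markovDavid (Y,Z)$ to rewrite this as $\I{X}{Y}-\I{X}{Z}-(\I{V}{Y}-\I{V}{Z})$, and then invoke the less-noisy hypothesis to conclude that the optimum is attained at constant $V$. The only cosmetic difference is that the paper phrases the chain-rule step as expanding $\I{X,V}{Y}-\I{X,V}{Z}$ in both orders, whereas you state it via $\Ic{V}{Y}{X}=\Ic{V}{Z}{X}=0$; these are the same computation.
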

\begin{proof}
Since $\W$ being less noisy implies $\W$ being more capable, we know that the one-way secret key rate is given by \eqref{eq:skrMC}. Using the chain rule we obtain
\begin{align}
	\Hc{X}{Z,V}-\Hc{X}{Y,V}&=\Ic{X}{Y}{V}-\Ic{X}{Z}{V}\\ 
	&= \I{X,V}{Y}-\I{X,V}{Z}-\I{V}{Y}+\I{V}{Z}\\
&=\I{X}{Y}-\I{X}{Z}-\left(\I{V}{Y}-\I{V}{Z} \right) \label{eq:MC}\\
&\leq \I{X}{Y}-\I{X}{Z}. \label{eq:IneqLN}
\end{align}
Equation~\eqref{eq:MC} follows from the chain rule and the Markov chain condition. The inequality uses the assumption of being less noisy.
\end{proof}
Note that \eqref{eq:skaLessNoisy} is also equal to the one-way secret-key rate for the case where $\W$ is degraded, as this implies $\W$ being less noisy.
The proof of Theorem~\ref{thm:skaRate} does not imply that there exists an \emph{efficient} one-way secret-key agreement protocol. 
A computationally efficient scheme was constructed in \cite{holenstein05}, but is not known to be practically efficient.\footnote{As defined in Section~\ref{sec:intro}, we call a scheme practically efficient if its computational complexity is essentially linear in the blocklength.} 

For key agreement with two-way communication, no formula comparable to \eqref{eq:highestRate} for the optimal rate is known. However, it has been shown that the two-way secret-key rate is strictly larger than the one-way secret-key rate. It is also known that the \emph{intrinsic information} $I(X;Y\!\!\downarrow \!Z):=\min_{P_{Z'|Z}}\Ic{X}{Y}{Z'}$ is an upper bound on $S(X;Y|Z)$, but is not tight \cite{ahlswede93,maurer99,renner03}.

\subsection{Private Channel Coding} \label{sec:introPCC}
Private channel coding over a wiretap channel is closely related to the task of one-way secret-key agreement from common randomness (cf.\ Section~\ref{sec:relation}). Here Alice would like to transmit a message $M^J \in \mathcal{M}^J$ privately to Bob. The messages can be distributed according to some arbitrary distribution $P_{M^J}$. To do so, she first encodes the message by computing $X^N=\enc(M^J)$ for some encoding function $\enc:\mathcal{M}^J \to \mathcal{X}^N$ and then sends $X^N$ over the wiretap channel to Bob (and to Eve), which is represented by $(Y^N,Z^N)=\W^N X^N$. Bob next decodes the received message to obtain a guess for Alice's message $\hat M^J = \dec(Y^N)$ for some decoding function $\dec:\mathcal{Y}^N \to \mathcal{M}^J$. As in secret-key agreement, the private channel coding scheme should be reliable, i.e.,\ satisfy the
\begin{equation}
\textnormal{reliability condition:}\quad \lim \limits_{J\to \infty}\Prv{M^J\ne \hat M^J}=0 \label{eq:reliabilityPCC}
\end{equation}
and (strongly) secure, i.e.,\ satisfy the
\begin{equation}
\textnormal{(strong) secrecy condition:}\quad \lim_{J\rightarrow \infty}\norm{P_{M^J,Z^N,C}-P_{M^J}\times P_{Z^N,C}}_1=0.
\label{eq:secrecyPCC}
\end{equation}
The variable $C$ denotes any additional information made public by the protocol. 

As mentioned in Section~\ref{subsec:SKA}, in the limit $J\to \infty$ this strong secrecy condition is  equivalent to the historically older (strong) secrecy condition \begin{equation}
\lim \limits_{J \to \infty} \I{M^J}{Z^N,C}=0. 
\end{equation} 
The highest achievable rate $R:=\lim_{N\rightarrow \infty}\frac JN$ fulfilling \eqref{eq:reliabilityPCC} and \eqref{eq:secrecyPCC} is called the \emph{secrecy capacity}. 

Csisz\'ar and K\"orner showed \cite[Corollary 2]{csiszar78} that there exists a single-letter formula for the secrecy capacity.\footnote{Maurer and Wolf showed that the single-letter formula remains valid considering strong secrecy \cite{maurer00}.}
\begin{mythm}[\cite{csiszar78}]
For an arbitrary DM WTC $\W$ as introduced above,
\begin{equation} \label{eq:secrecyCapacity}
C_s=\left \lbrace \begin{array}{rl}
\max \limits_{P_{V,X}} & \Hc{V}{Z}-\Hc{V}{Y}\\
\mathrm{s.t.} & V \markovDavid X \markovDavid (Y,Z), \\
&|\mathcal{V}|\leq |\mathcal{X}|. 
\end{array} \right.
\end{equation}
\end{mythm}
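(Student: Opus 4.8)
The plan is to establish the two matching bounds $C_s\le C^{\ast}$ and $C_s\ge C^{\ast}$, where $C^{\ast}$ denotes the right-hand side of \eqref{eq:secrecyCapacity}. The converse $C_s\le C^{\ast}$ is the classical single-letterization argument of Csisz\'ar and K\"orner and carries over verbatim; in fact the strong secrecy condition \eqref{eq:secrecyPCC} only makes the converse easier, since the rate converse already follows from weak secrecy. The achievability $C_s\ge C^{\ast}$ admits a classical Wyner-type random-binning proof for the weak-secrecy version; the strong-secrecy statement as written follows either from the Maurer--Wolf weak-to-strong conversion or -- and this is the route relevant to this paper -- from the explicit polar-coding scheme of Section~\ref{sec:pcc}, which moreover has encoding and decoding complexity essentially linear in $N$.

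For the converse, fix any length-$N$ code of the form described in Section~\ref{sec:introPCC} with $2^J$ equiprobable messages and satisfying \eqref{eq:reliabilityPCC} and \eqref{eq:secrecyPCC} (the protocol makes nothing public, so $C$ is trivial), so that $\Hh{M^J}=J$. Fano's inequality applied to $\hat M^J=\dec(Y^N)$, together with the continuity of mutual information applied to \eqref{eq:secrecyPCC}, yields a sequence $\epsilon_N\to 0$ with
\begin{equation}
J\;\le\;\I{M^J}{Y^N}-\I{M^J}{Z^N}+N\epsilon_N .
\end{equation}
Telescoping $\I{M^J}{Y^N}$ over $Y^{i-1}$ and $\I{M^J}{Z^N}$ over $Z_{i+1}^N$ and applying the Csisz\'ar sum identity gives the single-letterization
\begin{equation}
\I{M^J}{Y^N}-\I{M^J}{Z^N}=\sum_{i=1}^{N}\Bigl(\Ic{V_i}{Y_i}{U_i}-\Ic{V_i}{Z_i}{U_i}\Bigr),\qquad U_i:=(Y^{i-1},Z_{i+1}^N),\quad V_i:=(M^J,U_i),
\end{equation}
where each $V_i\markovDavid X_i\markovDavid(Y_i,Z_i)$ holds because the channel is memoryless. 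Introducing a time-sharing variable $Q$ uniform on $[N]$ and independent of everything, and setting $U:=(Q,U_Q)$, $V:=(Q,V_Q)$, $X:=X_Q$, the right-hand side equals $N\bigl(\Ic{V}{Y}{U}-\Ic{V}{Z}{U}\bigr)$ with $V\markovDavid X\markovDavid(Y,Z)$; and for each value $u$ of $U$ the quantity $\I{V}{Y}-\I{V}{Z}$ evaluated under $P_{V,X,Y,Z\mid U=u}$ is at most $C^{\ast}$, because $C^{\ast}$ already maximizes over the input distribution $P_X$. Hence $J/N\le C^{\ast}+\epsilon_N$ and, letting $N\to\infty$, $C_s\le C^{\ast}$. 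The restriction $|\mathcal V|\le|\mathcal X|$ then follows from a Fenchel--Eggleston--Carath\'eodory argument, since preserving $P_X$ costs $|\mathcal X|-1$ constraints and the objective $\Hc{V}{Z}-\Hc{V}{Y}$ one more.

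For the achievability, fix a maximizer $P_{V,X}$ of \eqref{eq:secrecyCapacity} and note that $\Hc{V}{Z}-\Hc{V}{Y}=\I{V}{Y}-\I{V}{Z}$. The classical scheme draws $\approx 2^{N I(V;Y)}$ sequences $v^N$ i.i.d.\ from $P_V^{\otimes N}$, partitions them into $\approx 2^{N(I(V;Y)-I(V;Z))}$ bins of size $\approx 2^{N I(V;Z)}$, encodes message $m$ by a uniformly random codeword of bin $m$, maps it symbol-by-symbol through the prefix channel $P_{X\mid V}^{\otimes N}$ to obtain $X^N$, and lets Bob jointly-typicality-decode $v^N$ from $Y^N$ and output its bin index. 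Reliability holds because the total codebook rate lies below $\I{V}{Y}$, and the rate of messages is $\I{V}{Y}-\I{V}{Z}$, as desired. Weak secrecy follows from the standard equivocation estimate, the per-bin randomization of rate $\approx\I{V}{Z}$ absorbing Eve's information. To upgrade to the strong secrecy condition \eqref{eq:secrecyPCC} one replaces this estimate by a channel-resolvability (soft-covering) bound showing that for every $m$ the output distribution induced at Eve by bin $m$ lies within vanishing total-variation distance of $P_Z^{\otimes N}$, hence is nearly independent of $m$; alternatively one invokes the Maurer--Wolf transformation, or the polar-coding construction of Section~\ref{sec:pcc}, which meets \eqref{eq:secrecyPCC} directly and with complexity $O(N\log N)$.

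I expect the only genuinely delicate point to be the strong-secrecy part of the achievability. The original Csisz\'ar--K\"orner analysis controls only $\tfrac1N\I{M^J}{Z^N}$, and promoting this to $\I{M^J}{Z^N}\to 0$ -- equivalently, to the $L_1$ bound \eqref{eq:secrecyPCC} -- without losing rate, and in the present setting without losing computational efficiency, is exactly what the polar-coding scheme in the remainder of the paper is built to accomplish; the converse and the weak-secrecy achievability are, by contrast, routine.
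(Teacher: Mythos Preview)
The paper does not prove this theorem at all: it is stated with attribution to Csisz\'ar and K\"orner \cite{csiszar78}, with the accompanying footnote pointing to Maurer and Wolf \cite{maurer00} for the extension to strong secrecy. No argument is given in the paper beyond these citations; the subsequent Corollary~\ref{cor:PCCmoreCapable} is likewise proved only by reference to \cite{korner77} and \cite{elgamal12}.

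Your sketch is therefore not competing with any proof in the paper. As a standalone outline of the classical argument it is essentially correct: the converse via Fano, the Csisz\'ar sum identity, and time sharing is the standard route, and your achievability discussion (random binning with prefix channel $P_{X|V}$, then upgrading weak to strong secrecy either by Maurer--Wolf or by channel resolvability, or by the polar scheme of Section~\ref{sec:pcc}) matches exactly what the paper's citations cover. One small point worth tightening in your converse: after the time-sharing step you land on $\Ic{V}{Y}{U}-\Ic{V}{Z}{U}$ with the chain $U\markovDavid V\markovDavid X\markovDavid (Y,Z)$, and you then bound this by $C^{\ast}$ by fixing $U=u$; for this to be licit you need that, conditionally on $U=u$, the chain $V\markovDavid X\markovDavid(Y,Z)$ still holds with the \emph{same} transition law $P_{Y,Z|X}$. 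This is true by memorylessness, but it is the step most often glossed over and is worth stating explicitly. The cardinality bound via Fenchel--Eggleston--Carath\'eodory is also standard and correct.
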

This expression can be simplified using additional assumptions about $\W$.
\begin{mycor}[\cite{korner77}] \label{cor:PCCmoreCapable}
If $\W$ is more capable,
\begin{equation}
C_s= \Hc{X}{Z}-\Hc{X}{Y}. \label{eq:capacityCapable}
\end{equation}
\end{mycor}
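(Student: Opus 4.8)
The plan is to reduce the general single-letter formula \eqref{eq:secrecyCapacity} to \eqref{eq:capacityCapable} by showing that the auxiliary random variable $V$ is of no use when $\W$ is more capable; the argument runs in close parallel to the proof of the more-capable simplification of $\SK{X}{Y}{Z}$ given above. First I would rewrite the objective in terms of mutual information,
\begin{equation*}
\Hc{V}{Z}-\Hc{V}{Y}=\I{V}{Y}-\I{V}{Z},
\end{equation*}
and then apply the chain rule to the pair $(X,V)$ in two ways, using the Markov chain $V\markovDavid X\markovDavid(Y,Z)$ (which forces $\Ic{V}{Y}{X}=\Ic{V}{Z}{X}=0$) to conclude $\I{X,V}{Y}=\I{X}{Y}$ and $\I{X,V}{Z}=\I{X}{Z}$. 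Combining these identities yields
\begin{equation*}
\I{V}{Y}-\I{V}{Z}=\I{X}{Y}-\I{X}{Z}-\bigl(\Ic{X}{Y}{V}-\Ic{X}{Z}{V}\bigr),
\end{equation*}
where $X$ carries the marginal $P_X$ induced by the chosen $P_{V,X}$.

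The key step is to observe that the correction term is non-negative. Writing $\Ic{X}{Y}{V}-\Ic{X}{Z}{V}=\sum_v P_V(v)\bigl(\I{X}{Y}-\I{X}{Z}\bigr)$, where the bracket is evaluated for the conditional input distribution $P_{X|V=v}$, each summand is $\geq 0$ because the more-capable hypothesis asserts $\I{X}{Y}\geq\I{X}{Z}$ for \emph{every} distribution on $X$, in particular for each $P_{X|V=v}$. Hence
\begin{equation*}
\I{V}{Y}-\I{V}{Z}\leq\I{X}{Y}-\I{X}{Z}=\Hc{X}{Z}-\Hc{X}{Y},
\end{equation*}
and since this bound is attained by taking $V$ constant, the maximization over $V$ in \eqref{eq:secrecyCapacity} collapses and $C_s=\max_{P_X}\bigl(\Hc{X}{Z}-\Hc{X}{Y}\bigr)$, which is \eqref{eq:capacityCapable}.

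I do not expect a genuine obstacle here: the points requiring care are the bookkeeping of the two chain-rule expansions (making sure the Markov chain is invoked in the right place to kill $\Ic{V}{Y}{X}$ and $\Ic{V}{Z}{X}$), the direction of the inequality (the correction term must be subtracted with a non-negative sign so that dropping $V$ can only increase the objective), and the remark that ``more capable'' is a statement about all channel input distributions and therefore transfers to the conditional distributions $P_{X|V=v}$. The cardinality constraint $|\mathcal V|\le|\mathcal X|$ plays no role, since we only use that $V$ may be dispensed with.
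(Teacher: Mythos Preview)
The paper itself does not prove this corollary; it simply refers the reader to \cite{korner77} or \cite[Section~22.1]{elgamal12}. Your argument is the standard one and is essentially correct, closely paralleling the paper's own proof of the more-capable simplification of $\SK{X}{Y}{Z}$ given just above Corollary~\ref{cor:LN_SKA}.

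There is one small slip in the achievability step: the upper bound $\I{X}{Y}-\I{X}{Z}$ is \emph{not} attained by taking $V$ constant---that choice makes the objective $\Hc{V}{Z}-\Hc{V}{Y}$ identically zero---but rather by taking $V=X$, which is admissible since $|\mathcal V|\le|\mathcal X|$ and the Markov chain $V\markovDavid X\markovDavid(Y,Z)$ holds trivially. With that correction the argument goes through, and you correctly obtain $C_s=\max_{P_X}\bigl(\Hc{X}{Z}-\Hc{X}{Y}\bigr)$, the maximization over input distributions being implicit in \eqref{eq:capacityCapable}.
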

\begin{proof}
A proof can be found in \cite{korner77} or \cite[Section 22.1]{elgamal12}.
\end{proof}

\subsection{Previous Work and Our Contributions} \label{sec:relation}

In Section~\ref{sec:ska}, we present a one-way secret-key agreement scheme based on polar codes that achieves the secret-key rate, is strongly secure, reliable and whose implementation is practically efficient, with complexity $O(N \log N)$ for blocklength $N$.   
Our protocol improves previous efficient secret-key constructions \cite{abbeITW}, where only weak secrecy could be proven and where the eavesdropper has no prior knowledge and/or degradability assumptions are required. However, we are not able to give an efficient algorithm for code construction.

In Section~\ref{sec:pcc}, we introduce a coding scheme based on polar codes that provably achieves the secrecy capacity for arbitrary discrete memoryless wiretap channels. We show that the complexity of the  encoding and decoding operations is $O(N\log N)$ for blocklength $N$. Our scheme improves previous work on practically efficient private channel coding at the optimal rate \cite{vardy11}, where only weak secrecy could be proven under the additional assumption that the channel $\W$ is degraded.\footnote{Note that Mahdavifar and Vardy showed that their scheme achieves strong secrecy if the channel to Eve (induced from $\W$) is noiseless. Otherwise their scheme is not provably reliable \cite{vardy11}.} Recently, Bellare et al. introduced an efficient coding scheme that is strongly secure and achieves the secrecy capacity for binary symmetric wiretap channels \cite{bellare12}.\footnote{They claim that their scheme works for a large class of wiretap channels. However, this class has not been characterized precisely so far. It is therefore not clear wether their scheme requires for example degradability assumptions. Note that to obtain strong secrecy for an arbitrarily distributed message, it is required that the wiretap channel is symmetric \cite[Lemma14]{bellare12}.}  Several other constructions of private channel coding schemes have been reported \cite{andersson10,hof10,gamal10}, but all achieve only weak secrecy.

The tasks of one-way secret-key agreement and private channel coding explained in the previous two subsections are closely related. Maurer showed how a one-way secret-key agreement can be derived from a private channel coding scenario \cite{maurer93}. More precisely, he showed how to obtain the common randomness needed for one-way secret-key agreement by constructing a ``virtual'' degraded wiretap channel from Alice to Bob. This approach can be used to obtain the one-way secret-key rate from the secrecy capacity result in the wiretap channel scenario \cite[Section 22.4.3]{elgamal12}.

One of the main advantages of the two schemes introduced in this paper is that they are both practically efficient. However, even given a practically efficient private coding scheme, it is not known that Maurer's construction will yield a practically efficient scheme for secret key agreement. 
For this reason, as well as simplicity of presentation, we treat the one-way secret-key agreement and the private channel coding problem separately in the two sections to follow.


\section{One-Way Secret-Key Agreement Scheme}\label{sec:ska}
Our key agreement protocol is a concatenation of two subprotocols, an inner and an outer layer, as depicted in Figure~\ref{fig:SKA}. The protocol operates on blocks of $N$ i.i.d.\ triples $(X,Y,Z)$, which are divided into $M$ sub-blocks of size $L$ for input to the inner layer. In the following we assume $\mathcal{X}=\{0,1\}$, which however is only for convenience; the techniques of \cite{sasoglu09} and \cite{karzand10} can be used to generalize the schemes to discrete memoryless wiretap channels with arbitrary input size.

The task of the inner layer is to perform \emph{information reconciliation} and that of the outer layer is to perform \emph{privacy amplification}. 
Information reconciliation refers to the process of carrying out error correction to ensure that Alice and Bob obtain a shared bit string, and here we only allow communication from Alice to Bob for this purpose. On the other hand, privacy amplification refers to the process of distilling from Alice's and Bob's shared bit string a smaller set of bits whose correlation with the information available to Eve is below a desired threshold.

Each subprotocol in our scheme is based on the polarization phenomenon. For information reconciliation of Alice's random variable $X^L$ relative to Bob's information $Y^L$, Alice applies a polar transformation to $X^L$ and forwards the bits of the complement of the deterministic set $\mathcal D_{\epsilon_1}^L(X|Y)$ to Bob over a insecure public channel, which enables him to recover $X^L$ using the standard polar decoder~\cite{arikan09}. Her remaining information is then fed into a multilevel polar transformation and the bits of the random set are kept as the secret key.

Let us now define the protocol more precisely.
For $L=2^\ell$, $\ell \in \mathbb{Z}^+$, let $V^L=G_L X^L$ where $G_L$ is as defined in \eqref{eq:polarTrafo}. For $\epsilon_1>0$, we define 
\begin{align}
\mathcal E_K:=\mathcal D_{\epsilon_1}^L(X|Y), \label{eq:Eset}
\end{align}
with $K:=|\mathcal D_{\epsilon_1}^L(X|Y)|$. Then, let $T_{(j)}=V^L[\mathcal{E}_K]_j$ for $j=1,\dots,K$ and $C_{(j)}=V^L[\mathcal{E}_K^c]_j$ for $j=1,\dots,L-K$ so that $T=(T_{(1)},\dots,T_{(K)})$ and $C=(C_{(1)},\dots,C_{(L-K)})$.  
For $\epsilon_2>0$ and $U^{M}_{(j)}= G_{M}T_{(j)}^M$ for $j=1,\dots K$ (or, more briefly, $U^M=\widetilde G_M^KT^M$),  we define 
\begin{align}
\mathcal F_J:=\mathcal R_{\epsilon_2}^M(T|CZ^L), \label{eq:Fset}
\end{align}
with $J:=|R_{\epsilon_2}^M(T|CZ^L)|$.

 \begin{table}[!htb]
\centering 
\begin{tabular}{c}
  \Xhline{3\arrayrulewidth}  \hspace{1mm} \vspace{-3mm}\\ 
\hspace{45mm}{\bf{Protocol 1:}} \hypertarget{protocol:ska}{One-way secret-key agreement}\hspace{45mm} \\ \vspace{-3mm} \\ \hline 
\end{tabular}
  \begin{tabular}{l l}
  {\bf{Given:}} & Index sets $\mathcal{E}_K$ and $\mathcal{F}_J$ (code construction)\\
   {\bf{Notation:}} & Alice's input: $x^N \in \mathbb{Z}_2^N$ (a realization of $X^N$)\\
 & Bob's / Eve's input: $(y^N,z^N)$ (realizations of $Y^N$ and $Z^N$) \\
 & Alice's output: $s_A^J$\\
 & Bob's output: $s_B^J$\\ \\
{\bf Step 1: } & Alice computes $v_{i+1}^{i+L}=G_{L}x_{i+1}^{i+L}$ for all $i\in\{0,L,2L,\ldots,(M-1)L\}$.\\
{\bf Step 2: } & Alice computes $t_i=v_{i+1}^{i+L}[\mathcal{E}_K]$ for all $i\in\{0,L,2L,\ldots,(M-1)L\}$.\\
{\bf Step 3: } & Alice sends $c_i=v^{i+L}_{i+1}[\mathcal{E}_K^\setC]$ for all $i\in\{0,L,2L,\ldots,(M-1)L\}$ over a public channel to Bob.\\
{\bf Step 4: } & Alice computes $u^{M}=\widetilde G_{M}^K  t^{M}$ and obtains $s_A^{J}=u^{M}[\mathcal{F}_J]$.\footnotemark[\value{footnote}]\\ 
{\bf Step 5: } & Bob applies the standard polar decoder \cite{arikan09,honda12} to $(c_i,y_{i+1}^{i+L})$ to obtain $\hat v^{i+L}_{i+1}$ and \\
&${\hat t}_i=\hat v_{i+1}^{i+L}[\mathcal{E}_K]$, for $i \in \{0,L,2L,\ldots,(M-1)L\}$.\\
{\bf Step 6: } & Bob computes $\hat u^{M}=\tilde G_{M}^K {t}^{M}$ and obtains $s_B^J= \hat u^{M}[\mathcal{F}_J]$.\\
\vspace{-10mm}
  \end{tabular}
\begin{tabular}{c}
\hspace{43mm} \phantom{ {\bf{Protocol 1:}} One-way secret-key agreement}\hspace{45mm} \\ \vspace{-2.5mm} \\\Xhline{3\arrayrulewidth}
\end{tabular}
\end{table}
 \addtocounter{footnote}{-1}
\footnotetext{The expression $u^M[\mathcal F_J]$ is an abuse of notation, as $\mathcal F_J$ is not a subset of [M]. The expression should be understood to be the union of the random bits of $u_{(j)}^M$, for all $j=1,\dots,K$, as in the definition of $\mathcal R^M_{\epsilon_2}(T|CZ^L)$.\label{Footnote:Protocol1}}

\begin{figure}[!htb]
\centering
\input{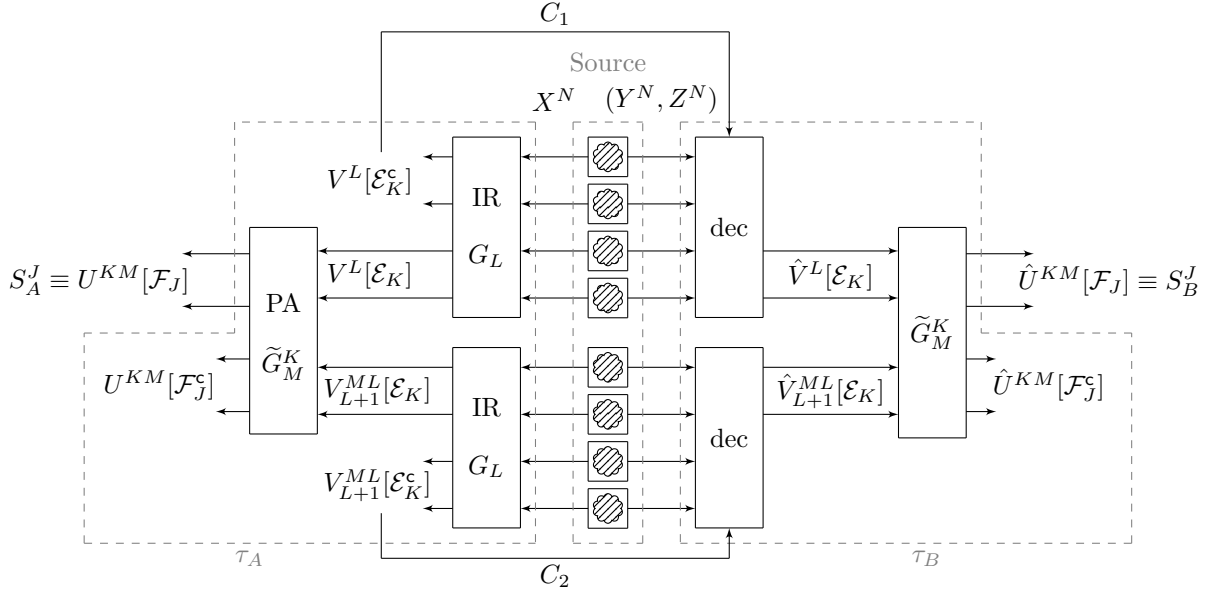}
\caption{\small The secret-key agreement scheme for the setup $N=8$, $L=4$, $M=2$, $K=2$, and $J=2$. We consider a source that produces $N$ i.i.d.\ copies $(X^N,Y^N,Z^N)$ of a triple of correlated random variables $(X,Y,Z)$. Alice performs the operation $\tau_A$, sends $(V^L[\mathcal{E}_K^\setC])^M$ over a public channel to Bob and obtains $S_A^J$, her secret key. Bob then performs the operation $\tau_B$ which results in his secret key $S_B^J$.}
\label{fig:SKA}
\end{figure}


\subsection{Rate, Reliability, Secrecy, and Efficiency} \label{sec:results}
\begin{mythm} \label{thm:main}
Protocol~\hyperlink{protocol:ska}{1} allows  Alice and Bob to generate a secret key $S_A^J$ respecitvely $S_B^J$ using public one-way communication $C^M$
 such that for $\beta < \tfrac{1}{2}$ :
\begin{alignat}{2}
&\textnormal{Reliability:}\quad&&\Prv{S_A^J \ne S_B^J}= O\!\left(M 2^{-L^{\beta}}\right) \label{eq:reliability2}\\
&\textnormal{Secrecy:}&&\norm{P_{S_A^{J}, Z^N, C}-\overline P_{S_A^{J}}\times P_{Z^N,C}}_1= O\!\left(\sqrt{N} 2^{-\frac{N^{\beta}}{2}}\right) \\
&\textnormal{Rate:}&&R:=\frac JN= \Hc{X}{Z}-\frac{1}{L}\Hc{V^L[\mathcal{E}_K^\setC]}{Z^L}-\frac{o(N)}{N}.
\end{alignat}
All operations by both parties may be performed in $O(N \log N)$ steps. 
\end{mythm}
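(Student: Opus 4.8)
The plan is to prove the four assertions---complexity, reliability, rate, and secrecy---separately, each by reduction to a property of polar codes from Section~\ref{sec:polarization}, used together with the rate-of-polarization refinement of Theorem~\ref{thm:polPheno} (see \cite{arikan10}): for $\beta<\tfrac12$ the polarized sets may be taken to contain only indices whose Bhattacharyya parameters are either at most $2^{-(\cdot)^{\beta}}$ or at least $1-2^{-(\cdot)^{\beta}}$.

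\emph{Complexity, reliability, and rate.} For the complexity, Steps~1 and~5 are $M$ forward transforms $G_L$ and $M$ runs of the successive-cancellation (SC) decoder, each costing $O(L\log L)$; Steps~4 and~6 are $K$ transforms $G_M$, each costing $O(M\log M)$; Steps~2 and~3 are coordinate selections. The total is $O(ML\log L+KM\log M)=O(N\log N)$ since $KM\le LM=N$. For reliability, note that if $\hat v_{i+1}^{i+L}=v_{i+1}^{i+L}$ for all $M$ sub-blocks then $\hat t^M=t^M$, so the identical deterministic maps of Steps~4 and~6 give $\hat u^M=u^M$ and $s_B^J=s_A^J$; by the union bound $\Prv{S_A^J\ne S_B^J}\le\sum_{m=1}^M\Prv{\hat V_{(m)}^L\ne V_{(m)}^L}$. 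Each summand is the block-error probability of polar source coding of $X^L$ with side information $Y^L$ and frozen coordinates $\mathcal E_K^\setC$; choosing $\mathcal E_K=\mathcal D_{\epsilon_1}^L(X|Y)$ to contain only indices with $\Zc{V_i}{V^{i-1},Y^L}\le 2^{-L^\beta}$, the Bhattacharyya bound for the SC decoder~\cite{arikan09,honda12} gives $\Prv{\hat V^L\ne V^L}\le\sum_{i\in\mathcal E_K}\Zc{V_i}{V^{i-1},Y^L}=O(2^{-L^\beta})$, whence $\Prv{S_A^J\ne S_B^J}=O(M2^{-L^\beta})$. For the rate, the multilevel polarization phenomenon gives $J=M\,\Hc{T}{C,Z^L}-o(KM)$; since $(T,C)$ is a permutation of $V^L=G_LX^L$ and $G_L$ is invertible, $\Hc{T,C}{Z^L}=\Hc{X^L}{Z^L}=L\,\Hc{X}{Z}$ by the i.i.d.\ assumption, so $\Hc{T}{C,Z^L}=L\,\Hc{X}{Z}-\Hc{V^L[\mathcal E_K^\setC]}{Z^L}$, and dividing by $N=LM$ (using $KM\le N$, so $o(KM)=o(N)$) yields the claimed $R$.

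\emph{Secrecy.} Write $E:=(Z^N,C)$. Since $\overline P_{S_A^J}$ assigns mass $2^{-J}$ to each value, $\KL{P_{S_A^J,E}}{\overline P_{S_A^J}\times P_E}=J-\Hc{S_A^J}{E}$. Ordering the key bits $\{U_{(j),i}:(j,i)\in\mathcal F_J\}$ lexicographically in $(j,i)$, the chain rule together with monotonicity of entropy under conditioning and the fact that the earlier key bits are functions of $(U_{(j)}^{i-1},T_{(j-1)}^M,\dots,T_{(1)}^M)$---each $U_{(j')}^M$ being a bijective image of $T_{(j')}^M$---yields
\[
\Hc{S_A^J}{E}\ \ge\ \sum_{(j,i)\in\mathcal F_J}\Hc{U_{(j),i}}{U_{(j)}^{i-1},T_{(j-1)}^M,\dots,T_{(1)}^M,C^M,Z^N}.
\]
By the rate of polarization at each level of the multilevel construction, $\mathcal F_J$ may be taken to consist precisely of the indices for which the conditional entropies on the right exceed $1-2^{-M^\beta}$; this changes $J$ only within the $o(KM)$ term (so the rate is unaffected) and makes $J-\Hc{S_A^J}{E}\le J\,2^{-M^\beta}$. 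Pinsker's inequality then gives
\[
\norm{P_{S_A^J,Z^N,C}-\overline P_{S_A^J}\times P_{Z^N,C}}_1\ \le\ \sqrt{2\ln2\,\big(J-\Hc{S_A^J}{E}\big)}\ \le\ \sqrt{2\ln2\,J}\;2^{-M^\beta/2}.
\]
Since $J\le N$ and $M$ may be taken $\Theta(N)$ up to a polylogarithmic factor---a legitimate choice because the reliability bound $O(M2^{-L^\beta})$ still vanishes for $L$ polylogarithmic in $N$---this is $O\big(\sqrt N\,2^{-N^\beta/2}\big)$.

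\emph{Main obstacle.} The delicate part is the secrecy step: one must carry the exponential polarization rate $2^{-M^\beta}$ (rather than a fixed $\epsilon_2$) through the multilevel construction, so that every bit kept in the key is near-uniform conditioned on all of Eve's data and on all previously extracted bits, while preserving $J=M\,\Hc{T}{C,Z^L}-o(KM)$. Only then does the chain-rule/Pinsker estimate produce the stated $\sqrt N\,2^{-N^\beta/2}$ bound; using merely the fixed-$\epsilon$ sets of Theorem~\ref{thm:polPheno} would leave a residual of order $o(N)\,\epsilon_2$ that does not decay. A secondary bookkeeping point is verifying that the chain of bijections between the $T_{(j)}^M$ and the $U_{(j)}^M$ makes the conditioning sets line up exactly as required in the displayed entropy inequality.
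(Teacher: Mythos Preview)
Your proof is correct and follows essentially the same route as the paper: union bound over $M$ SC-decoder blocks for reliability, the multilevel polarization count plus the chain rule $\Hc{T,C}{Z^L}=L\,\Hc{X}{Z}$ for the rate, Pinsker's inequality combined with the chain rule and the definition of $\mathcal F_J$ for secrecy, and the standard $O(\cdot\log\cdot)$ costs of the polar transforms and decoders for complexity. Your handling of the secrecy exponent is in fact more careful than the paper's, which simply asserts $\epsilon_2=O(2^{-N^\beta})$ without discussing the choice of $M$ versus $L$; your observation that one needs $M$ within a polylogarithmic factor of $N$ (equivalently, a slightly smaller $\beta$) to convert $2^{-M^\beta}$ into the stated $2^{-N^\beta}$ is exactly the missing bookkeeping.
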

\begin{proof}
The reliability of Alice's and Bob's key follows from the standard polar decoder error probability and the union bound. Each instance of the decoding algorithm employed by Bob has an error probability which scales as $O(2^{-L^{\beta}})$ for $\beta < \tfrac{1}{2}$ \cite{arikan10}; application of the union bound gives the prefactor $M$. 

To prove the secrecy statement requires more effort. Using Pinsker's inequality we obtain
\begin{align}
\TD{P_{S_A^J,Z^N,C^M}}{\overline P_{S_A^J} \times P_{Z^N,C^M}} &\leq  \sqrt{\tfrac{\ln 2}{2} \KL{P_{S_A^J,Z^N,C^M}}{\overline P_{S_A^J} \times P_{Z^N,C^M}}} \\
 &= \sqrt{\tfrac{\ln 2}{2} \left(J-\Hc{S_A^J}{Z^N,C^M} \right)}, \label{eq:notdo}
\end{align}
where the last step uses the chain rule for relative entropies and that $\overline P_{S_A^J}$ denotes the uniform distribution. We can simplify the conditional entropy expression using the chain rule
\begin{align}
&\Hc{S_A^J}{Z^N,C^M} \nonumber\\
&\hspace{8mm}= \Hc{U^M[\mathcal{F}_J]}{Z^N,(V^L[\mathcal{E}_K^{\setC}])^M}\\
&\hspace{8mm}= \sum_{j=1}^K \Hc{U_{(j)}^M[\mathcal{F}_{(j)}]}{U_{(1)}^M[\mathcal{F}_{(1)}],\ldots,U_{(j-1)}^M[\mathcal{F}_{(j-1)}],Z^N,(V^L[\mathcal{E}_K^{\setC}])^M}\\
&\hspace{8mm}=\sum_{j=1}^K \sum_{i=1}^{\left|\mathcal{F}_{(i)} \right|}\Hc{U_{(j)}^M[\mathcal{F}_{(j)}]_i}{U_{(j)}^M[\mathcal{F}_{(j)}]^{i-1},U_{(1)}^M[\mathcal{F}_{(1)}],\ldots,U_{(j-1)}^M[\mathcal{F}_{(j-1)}],Z^N,(V^L[\mathcal{E}_K^{\setC}])^M}\\
&\hspace{8mm}\geq \sum_{j=1}^K \sum_{i \in \mathcal{F}_j} \Hc{U_{(j)i}}{U_{(j)}^{i-1},U_{(1)}^M[\mathcal{F}_{(1)}], \ldots, U_{(j-1)}^M[\mathcal{F}_{(j-1})],Z^N,(V^L[\mathcal{E}_K^{\setC}])^M} \\
&\hspace{8mm}\geq J\left(1-\epsilon_2\right),
\end{align}
where the first inequality uses the fact that that conditioning reduces entropy and the second inequality follows by the definition of $\mathcal{F}_J$. Recall that we are using the notation introduced in Section~\ref{sec:polarization}. For $\mathcal{F}_J$ as defined in \eqref{eq:Fset}, we have $\mathcal{F}_J=\left \lbrace \mathcal{F}_{(j)} \right \rbrace_{j=1}^{K}$ where $\mathcal{F}_{(j)}=\mathcal{R}_{\epsilon_2}^M\left(T_{(j)}\left| T_{(j-1)},\ldots,T_{(1)},C,Z^L \right.\right)$.
The polarization phenomenon, Theorem~\ref{thm:polPheno}, implies $J=O(N)$, which together with \eqref{eq:notdo} proves the secrecy statement of Theorem~\ref{thm:main}, since $\epsilon_2=O(2^{-N^\beta})$ for $\beta<\tfrac{1}{2}$.


The rate of the scheme is
\begin{align}
R&=\frac{\left|\mathcal{F}_J\right|}{N}\\
&=\frac{1}{L}\Hc{V^L[\mathcal{E}_K]}{V^L[\mathcal{E}_K^\setC],Z^L}-\frac{o(N)}{N} \label{eq:rateForLater}\\
&=\frac{1}{L}\left(\Hc{V^L}{Z^L}-\Hc{V^L[\mathcal{E}_K^\setC]}{Z^L} \right)-\frac{o(N)}{N}\\
&=\Hc{X}{Z}-\frac{1}{L}\Hc{V^L[\mathcal{E}_K^\setC]}{Z^L}-\frac{o(N)}{N}, \label{eq:ratefini}
\end{align}
where \eqref{eq:rateForLater} uses the polarization phenomenon stated in Theorem~\ref{thm:polPheno}.

It remains to show that the computational complexity of the scheme is $O(N \log N)$. Alice performs the operation $G_L$ in the first layer $M$ times, each requiring $O(L \log L)$ steps \cite{arikan09}. In the second layer she performs $\tilde G_M^K$, or $K$ parallel instances of $G_M$, requiring  $O(KM\log M)$ total steps. From the polarization phenomenon, we have $K=O(L)$, and thus the complexity of Alice's operations is not worse than $O(N\log N)$. Bob runs $M$ standard polar decoders which can be done in $O(M L \log L)$ complexity \cite{arikan09,honda12}. Bob next performs the polar transform $\widetilde G_M^K$, whose complexity is not worse than $O(N \log N)$ as justified above. Thus, the complexity of Bob's operations is also not worse than $O(N\log N)$. 
\end{proof}

In principle, the two parameters $L$  and $M$ 
can be chosen freely. However, to maintain the reliability of the scheme (cf.\eqref{eq:reliability2}), $M$ may not grow exponentially fast in $L$. A reasonable choice would be to have both parameters scale comparably fast, i.e.,\ $\frac{M}{L}=O(1)$.

\begin{mycor} \label{cor:rate}
The rate of Protocol~\hyperlink{protocol:ska}{1} given in Theorem~\ref{thm:main} can be bounded as
\begin{equation}
R\geq \max\left \lbrace 0, \Hc{X}{Z}-\Hc{X}{Y}-\frac{o(N)}{N}\right \rbrace.
\end{equation}
\end{mycor}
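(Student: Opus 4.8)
The plan is to start from the exact rate expression established in Theorem~\ref{thm:main}, namely
\begin{equation*}
R = \Hc{X}{Z} - \frac{1}{L}\Hc{V^L[\mathcal{E}_K^{\setC}]}{Z^L} - \frac{o(N)}{N},
\end{equation*}
and bound the middle term from above by $\Hc{X}{Y} + o(N)/N$. Since rates are nonnegative by construction (the key length $J$ is nonnegative), taking the maximum with $0$ is then immediate, and the corollary follows. The one genuine inequality to prove is therefore
\begin{equation*}
\frac{1}{L}\Hc{V^L[\mathcal{E}_K^{\setC}]}{Z^L} \leq \Hc{X}{Y} + \frac{o(N)}{N}.
\end{equation*}

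First I would recall that $\mathcal{E}_K = \mathcal{D}_{\epsilon_1}^L(X|Y)$ is the \emph{deterministic} set of the polarization of $X^L$ relative to $Y^L$, so by Theorem~\ref{thm:polPheno} its complement $\mathcal{E}_K^{\setC}$ has size $L - K = L\,\Hc{X}{Y} + o(L)$. Next I would drop the conditioning on $Z^L$, using that conditioning only reduces entropy, so $\Hc{V^L[\mathcal{E}_K^{\setC}]}{Z^L} \leq \Hh{V^L[\mathcal{E}_K^{\setC}]}$. Finally, the entropy of any string of $|\mathcal{E}_K^{\setC}| = L-K$ bits is at most $L-K$, hence
\begin{equation*}
\frac{1}{L}\Hc{V^L[\mathcal{E}_K^{\setC}]}{Z^L} \leq \frac{L-K}{L} = \Hc{X}{Y} + \frac{o(L)}{L} = \Hc{X}{Y} + \frac{o(N)}{N},
\end{equation*}
where in the last step I use that $L$ and $N$ differ only by the factor $M$, which is controlled (e.g. $M/L = O(1)$), so $o(L)/L$ and $o(N)/N$ are interchangeable in the asymptotic statement. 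Substituting this bound into the rate formula gives $R \geq \Hc{X}{Z} - \Hc{X}{Y} - o(N)/N$, and combining with $R \geq 0$ yields the claim.

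There is really no serious obstacle here: the proof is a short chain of elementary entropy inequalities (conditioning reduces entropy; entropy of $m$ bits is at most $m$) applied to the closed-form rate already in hand, together with the sizing of the deterministic set from the polarization theorem. The only point requiring a word of care is the bookkeeping between the $o(N)$ and $o(L)$ error terms, which is harmless given the relationship $N = ML$ and the remark following Theorem~\ref{thm:main} that $M$ does not grow too fast in $L$; I would simply note this rather than belabor it.
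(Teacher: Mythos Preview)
Your proof is correct and follows essentially the same route as the paper: bound $\Hc{V^L[\mathcal{E}_K^{\setC}]}{Z^L}\leq |\mathcal{E}_K^{\setC}|$ using the trivial entropy bound, then invoke the polarization phenomenon to get $|\mathcal{E}_K^{\setC}|/L=\Hc{X}{Y}+o(L)/L$. The paper's version is a touch terser (it goes directly to $|\mathcal{E}_K^{\setC}|$ without the intermediate step of dropping the $Z^L$ conditioning), but the argument is the same.
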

\begin{proof}
According to \eqref{eq:ratefini} the rate of Protocol~\hyperlink{protocol:ska}{1} is
\begin{align}
R&=\Hc{X}{Z}-\frac{1}{L}\Hc{V^L[\mathcal{E}_K^\setC]}{Z^L}-\frac{o(N)}{N}\\
&\geq \max \left \lbrace 0,\Hc{X}{Z}-\frac{\left|\mathcal{E}_K^\setC\right|}{L}-\frac{o(N)}{N} \right \rbrace \label{eq:tight}\\
&= \max \left \lbrace 0, \Hc{X}{Z}-\Hc{X}{Y}-\frac{o(N)}{N}\right \rbrace, \label{eq:laststepR}
\end{align}
where \eqref{eq:laststepR} uses the polarization phenomenon stated in Theorem~\ref{thm:polPheno}.
\end{proof}



\subsection{Achieving the Secret-Key Rate}
Theorem~\ref{thm:main} together with Corollaries~\ref{cor:LN_SKA} and~\ref{cor:rate} immediately imply that Protocol~\hyperlink{protocol:ska}{1} achieves the secret-key rate $\SK{X}{Y}{Z}$ if $P_{X,Y,Z}$ is such that the induced DM WTP $\W$ is less noisy.
If we can solve the optimization problem \eqref{eq:highestRate}, i.e.,\ find the optimal auxiliary random variables $V$ and $U$, our one-way secret-key agreement scheme can achieve $\SK{X}{Y}{Z}$ for a general setup. We then make $V$ public, replace $X$ by $U$ and run Protocol~\hyperlink{protocol:ska}{1}.  Note that finding the optimal random variables $V$ and $U$ might be difficult. It has been shown that for certain distributions the optimal random variables $V$ and $U$ can be found analytically \cite{holenstein05}.

Two open problems discussed in Section~\ref{sec:discussion} address the question if Protocol~\hyperlink{protocol:ska}{1} can achieve a rate that is strictly larger than $\max \left \lbrace 0, \Hc{X}{Z}-\Hc{X}{Y} \right \rbrace$ if nothing about the optimal auxiliary random variables $V$ and $U$ is known, i.e.,\ if we run the protocol directly for $X$ without making $V$ public.

\subsection{Code Construction} \label{subsec:SKA_codeConstr}
Before the protocol starts one must construct the code, i.e.\ compute the index sets $\mathcal{E}_K$ and $\mathcal{F}_J$. 
The set $\mathcal{E}_K$ can be computed approximately with a linear-time algorithm introduced in \cite{tal12}, given the distributions $P_X$ and $P_{Y|X}$. 
 Alternatively, Tal and Vardy's older algorithm \cite{talandvardy10} and its adaption to the asymmetric setup \cite{honda12} can be used.

To compute the outer index set $\mathcal{F}_J$ even approximately requires more effort. In principle, we can again use the above algorithms, which require a description of the ``super-source'' seen by the outer layer, i.e.\ the source which outputs the triple of random variables  $(V^L[\mathcal{E}_K],(Y^L, V^L[\mathcal{E}_K^{\setC}]),(Z^L, V^L[\mathcal{E}_K^{\setC}]))$. However, its alphabet size is exponential in $L$, and thus such a direct approach will not be efficient in the overall blocklength $N$. Nonetheless, due to the structure of the inner layer, it is perhaps possible that the method of approximation by limiting the alphabet size~\cite{talandvardy10,tal12} can be extended to this case. 
In particular, a recursive construction motivated by the decoding operation introduced in \cite{concEntagDist} could potentially lead to an efficient computation of the index set $\mathcal{F}_J$.

\section{Private Channel Coding Scheme}\label{sec:pcc}
Our private channel coding scheme is a simple modification of the secret key agreement protocol of the previous section. Again it consists of two layers, an inner layer which ensures transmitted messages can be reliably decoded by the intended receiver, and an outer layer which guarantees privacy from the unintended receiver. The basic idea is to simply run the key agreement scheme in reverse, \emph{inputting} messages to the protocol where secret key bits would be \emph{output} in key agreement. The immediate problem in doing so is that key agreement also produces outputs besides the secret key, so the procedure is not immediately reversible. To overcome this problem, the encoding operations here simulate the random variables output in the key agreement protocol, and then perform the polar transformations $\widetilde G_M^K$ and $G_L$ in reverse.\footnote{As it happens, $G_L$ is its own inverse.}  

The scheme is visualized in Figure~\ref{fig:pcc} and described in detail in Protocol~\hyperlink{protocol:pcc}{2}. Not explicitly shown is the simulation of the bits $U^M[\mathcal F_J]$ at the outer layer and the bits $V^L[\mathcal E^c_K]$ at the inner layer. The outer layer, whose simulated bits are nearly deterministic, makes use of the method described in  \cite[Definition 1]{sutter12}, while the inner layer, whose bits are nearly uniformly-distributed, follows \cite[Section IV]{honda12}. Both proceed by successively sampling from the individual bit distributions given all previous values in the particular block, i.e.,\ constructing $V_j$ by sampling from $P_{V_j|V^{j-1}}$. These distributions can be efficiently constructed, as described in Section~\ref{sec:ccpcc}. 
 
Note that a public channel is used to communicate the information reconciliation information to Bob, enabling reliable decoding. However, it is possible to dispense with the public channel and still achieve the same rate and efficiency properties, as will be discussed in Section~\ref{sec:ccpcc}.

In the following we assume that the message $M^J$ to be transmitted is uniformly distributed over the message set $\mathcal{M}=\left \lbrace 0,1 \right \rbrace^J$. As mentioned in Section~\ref{sec:introPCC}, it may be desirable to have a private coding scheme that works for an arbitrarily distributed message. This can be achieved by assuming that the wiretap channel $\W$ is symmetric---more precisely, by assuming that the two channels $\W_1:\mathcal{X}\to \mathcal{Y}$ and $\W_2:\mathcal{X}\to \mathcal{Z}$ induced by $\W$ are symmetric. We can define a super-channel $\W':\mathcal{T}\to \mathcal{Y}^L \times \mathcal{Z}^L\times \mathcal{C}$ which consists of an inner encoding block and $L$ basic channels $\W$.\footnote{This super-channel is explained in more detail in Section~\ref{sec:diss_LN}.} The super-channel $\W'$ again induces two channels $\W_1':\mathcal{T}\to \mathcal{Y}^L \times \mathcal{C}$ and $\W_2':\mathcal{T}\to \mathcal{Z}^L\times \mathcal{C}$. Ar{\i}kan showed that $\W_1$ respectively $\W_2$ being symmetric implies that $\W_1'$ respectively $\W_2'$ is symmetric \cite[Proposition 13]{arikan09}. It has been shown in \cite[Proposition 3]{vardy11} that for symmetric channels polar codes remain reliable for an arbitrary distribution of the message bits. We thus conclude that if $\W_1$ is assumed to be symmetric, our coding scheme remains reliable for arbitrarily distributed messages. Assuming having a symmetric channel $\W_2$ implies that $\W_2'$ is symmetric which proves that our scheme is strongly secure for arbitrarily distributed messages.\footnote{This can be seen easily by the strong secrecy condition given in \eqref{eq:secrecyPCC} using that $\W_2'$ is symmetric.} 

 \begin{table}[!htb]
\centering 
\begin{tabular}{c}
  \Xhline{3\arrayrulewidth}  \hspace{1mm} \vspace{-3mm}\\ 
\hspace{53mm}{\bf{Protocol 2:}} \hypertarget{protocol:pcc}{Private channel coding}\hspace{53mm} \\ \vspace{-3mm} \\ \hline 
\end{tabular}
  \begin{tabular}{l l} 
  \addtocounter{footnote}{1}
  {\bf{Given:}} & Index sets $\mathcal{E}_K$ and $\mathcal{F}_J$ (code construction)\footnotemark[\value{footnote}]\\  
  \addtocounter{footnote}{1}
   {\bf{Notation:}} & Message to be transmitted: $m^J$\\\\
{\bf Outer encoding: } & Let $u^{M}[\mathcal{F}_J]=m^J$\footnotemark[\value{footnote}] and $u^{M}[\mathcal{F}_J^{\setC}]=r^{KM-J}$ where $r^{KM-J}$ is (randomly) generated\\
   & as explained in \cite[Definition 1]{sutter12}. Let $t^{M}=\widetilde G_{M}^K u^{M}$.\\
{\bf Inner encoding: } & For all $i\in \{0,L,\ldots,L(M-1) \}$, Alice does the following: let $\bar v_{i+1}^{i+L}[\mathcal{E}_K]=t_{(i/L) +1}$\\
& and $\bar v_{i+1}^{i+L}[\mathcal{E}_K^{\setC}]=s_{i+1}^{i+L-K}$ where $s_{i+1}^{i+L-K}$ is (randomly) generated as explained in \\
&\cite[Section IV]{honda12}. Send $C_{(i/K)+1}:=s_{i+1}^{i+L-K}$ over a public channel to Bob. Finally,\\
& compute $x_{i+1}^{i+L}=G_L \bar v_{i+1}^{i+L}$.\\
{\bf Transmission: } & $(y^N,z^N)=\W^N x^N$\\
{\bf Inner decoding: } & Bob uses the standard decoder \cite{arikan09,honda12} with inputs $C_{(i/L)+1}$ and $y_{i+1}^{i+L}$ to obtain $\hat v_{i+1}^{i+L}$,\\
 & and hence $\hat t_{(i/L)+1}=\hat{v}_{i+1}^{i+L}[\mathcal E_K]$, for each $i\in \{0,L,\ldots,L(M-1) \}$.\\ 
{\bf Outer decoding: } & Bob computes $\hat u^{M}=\widetilde G_{M}^K \hat t^{M}$ and outputs a guess for the sent message $\hat m^J=\hat u^{M}[\mathcal{F}_J]$.\\
\vspace{-10mm}  
\end{tabular}
\begin{tabular}{c}
\hspace{51mm} \phantom{ {\bf{Protocol 1:}} Private channel coding}\hspace{53mm} \\ \vspace{-2.5mm} \\ \Xhline{3\arrayrulewidth}
\end{tabular}
\end{table}
\addtocounter{footnote}{-2}
\footnotetext{By the code construction the channel input distribution $P_X$ is defined. $P_X$ should be chosen such that it maximizes the scheme's rate.}
\footnotetext{Again an abuse of notation. See the Footnote~\ref{Footnote:Protocol1} of Protocol~\hyperlink{protocol:ska}{1}.}

\begin{figure}[!htb]
\centering
\input{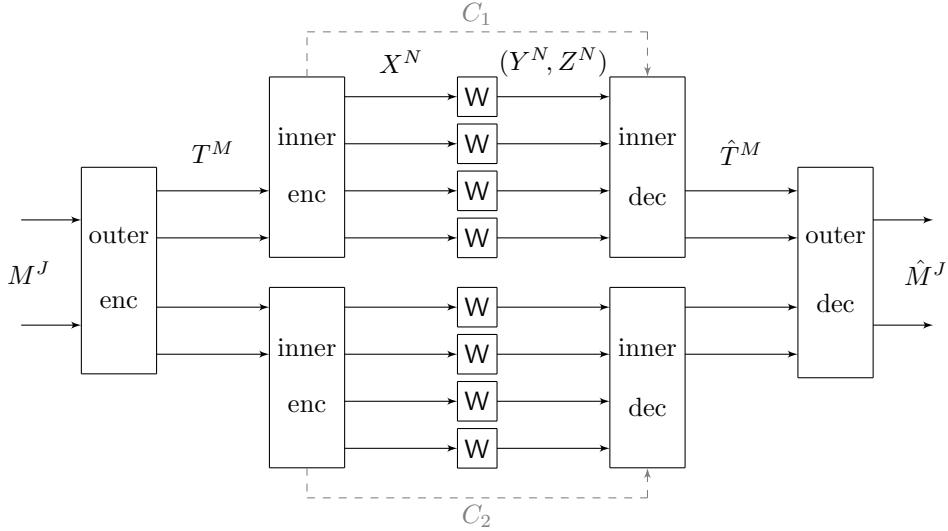}
\caption{\small The private channel coding scheme for the setup $N=8$, $L=4$, $M=2$, $K=2$, and $J=2$. The message $M^J$ is first sent through an outer encoder which adds some bits (simulated as explained in \cite[Section IV]{honda12}) and applies the polarization transform $\widetilde G_{M}^K$. The output $T^{M}=(T_{(1)},\dots,T_{(K)})^M$ is then encoded a second time by $M$ independent identical blocks. Note that each block again adds redundancy (as explained in \cite[Definition 1]{sutter12}) before applying the polarization transform $G_L$. Each inner encoding block sends the frozen bits over a public channel to Bob. Note that this extra public communcation can be avoided as justified in Section~\ref{sec:ccpcc}. The output $X^N$ is then sent over $N$ copies of the wiretap channel $\W$ to Bob. Bob then applies a decoding operation as in the key agreement scheme, Section~\ref{sec:ska}.}
\label{fig:pcc}
\end{figure}


\subsection{Rate, Reliability, Secrecy, and Efficiency}
\begin{mycor}\label{cor:Sec}
For any $\beta < \tfrac{1}{2}$, Protocol~\hyperlink{protocol:pcc}{2} satisfies
\begin{alignat}{2}
&\textnormal{Reliability:}\quad&&\Prv{M^J \ne \hat M^J}= O\!\left(M 2^{-L^{\beta}}\right) \\
&\textnormal{Secrecy:}&&\norm{P_{M^J,Z^N,C}-\overline P_{M^J}\times P_{Z^N,C}}_1= O\!\left(\sqrt{N} 2^{-\frac{N^{\beta}}{2}}\right) \\
&\textnormal{Rate:}&&R=\Hc{X}{Z}-\frac{1}{L}\Hc{V^L[\mathcal{E}_K^\setC]}{Z^L}-\frac{o(N)}{N} \label{eq:ratePCC}
\end{alignat}
and its computational complexity is $O(N \log N)$.
\end{mycor}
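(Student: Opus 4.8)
The plan is to deduce Corollary~\ref{cor:Sec} from Theorem~\ref{thm:main} by exploiting the fact that Protocol~\hyperlink{protocol:pcc}{2} is obtained from Protocol~\hyperlink{protocol:ska}{1} by running every polar transform backwards, feeding the message $M^J$ into the slot where the key $S_A^J$ would be read out, and replacing the remaining variables by simulated inputs. The central step is to compare two distributions: the one induced by the \emph{forward experiment} of Protocol~\hyperlink{protocol:ska}{1}, in which $(X^N,Y^N,Z^N)\sim P_{X,Y,Z}^{\otimes N}$ and Alice applies $\tau_A$; and the one induced by the \emph{reverse experiment} of Protocol~\hyperlink{protocol:pcc}{2}, in which $M^J$ is uniform, the bits $U^M[\mathcal{F}_J^\setC]$ are drawn by the sampler of \cite[Definition~1]{sutter12}, the bits $(V^L[\mathcal{E}_K^\setC])^M$ by the sampler of \cite[Section~IV]{honda12}, and $X^N$, hence $(Y^N,Z^N)=\W^N X^N$, is obtained by applying $\widetilde G_M^K$ and the $M$ copies of $G_L$. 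First I would verify that, conditioned on the value placed in the message slot, both samplers reproduce the forward conditional laws $P_{U_{(j),i}\mid\cdots}$ and $P_{V_j\mid V^{j-1},\cdots}$ up to a small per-bit total-variation error: the sampler of \cite[Definition~1]{sutter12} applied to a bit whose conditional entropy is at most $\epsilon_2$ deviates by $O(\epsilon_2)$, and the near-uniform bits simulated via \cite[Section~IV]{honda12} likewise introduce only a small error. Since total-variation distance is sub-additive under the successive conditional samplings and non-increasing under the deterministic maps $\widetilde G_M^K$, $G_L$ and $\W^N$, the joint law of $(M^J,Z^N,C)$ in the reverse experiment then lies within $O(\sqrt N\,2^{-N^{\beta}/2})$ of the joint law of $(S_A^J,Z^N,C)$ in the forward experiment with $S_A^J$ replaced by an exactly uniform copy; the polarization phenomenon, Theorem~\ref{thm:polPheno}, already guarantees $\TD{P_{S_A^J}}{\overline P_{S_A^J}}=O(\sqrt N\,2^{-N^{\beta}/2})$, so this last replacement is free.

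Granting this coupling, the four assertions follow almost immediately. For \emph{secrecy}, the triangle inequality gives $\norm{P_{M^J,Z^N,C}-\overline P_{M^J}\times P_{Z^N,C}}_1\le\norm{P_{S_A^J,Z^N,C}-\overline P_{S_A^J}\times P_{Z^N,C}}_1+O(\sqrt N\,2^{-N^{\beta}/2})$, and the first term on the right is $O(\sqrt N\,2^{-N^{\beta}/2})$ by Theorem~\ref{thm:main}. For \emph{reliability}, observe that $\hat M^J=\hat U^M[\mathcal F_J]$ is a deterministic function of $\hat T^M$ and $M^J$ the same function of $T^M$, so $\{M^J\ne\hat M^J\}\subseteq\{\hat T^M\ne T^M\}$, and the latter event forces at least one of the $M$ standard polar decoders to fail on its length-$L$ block; since the channel-input law in the reverse experiment is within the coupling error of $P_X^{\otimes N}$, so that each block pair $(X^L,Y^L)$ is within it of $P_{X,Y}^{\otimes L}$, each such decoder still has block-error probability $O(2^{-L^{\beta}})$ by \cite{arikan10,honda12}, and the union bound gives $O(M\,2^{-L^{\beta}})$, the coupling error being of strictly smaller order and absorbed. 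For the \emph{rate}, $J=\vert\mathcal F_J\vert=\vert\mathcal R_{\epsilon_2}^M(T\mid CZ^L)\vert$ is exactly the cardinality appearing in Theorem~\ref{thm:main}, so the chain-rule computation \eqref{eq:rateForLater}--\eqref{eq:ratefini} carries over verbatim and yields $R=J/N=\Hc{X}{Z}-\tfrac1L\Hc{V^L[\mathcal{E}_K^\setC]}{Z^L}-\tfrac{o(N)}{N}$. Finally, the \emph{complexity} bound is inherited term by term: the transforms $\widetilde G_M^K$ and the $M$ copies of $G_L$ cost $O(N\log N)$ as in the proof of Theorem~\ref{thm:main}, Bob's $M$ inner decoders cost $O(ML\log L)=O(N\log N)$, and each bit-by-bit sampler runs in time linear in the number of sampled bits once the required conditional distributions are available, which is established in Section~\ref{sec:ccpcc}.

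The hard part is the coupling step. One has to pin down exactly what the samplers of \cite[Definition~1]{sutter12} and \cite[Section~IV]{honda12} guarantee---the former for the nearly-deterministic outer-layer bits $U^M[\mathcal F_J^\setC]$, the latter for the nearly-uniform inner-layer bits $V^L[\mathcal E_K^\setC]$---and then check that the per-bit approximation errors accumulate additively rather than compound over the $O(N)$ sampled bits, so that the overall deviation stays at the order demanded by the secrecy statement and at strictly smaller order than that demanded by the reliability statement. Everything else in the corollary is either a direct quotation of Theorem~\ref{thm:main} or a one-line consequence of the data-processing inequality for total-variation distance.
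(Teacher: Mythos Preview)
Your proposal is correct and follows essentially the same route as the paper: establish that the simulated distribution in Protocol~\hyperlink{protocol:pcc}{2} is close in total variation to the forward distribution of Protocol~\hyperlink{protocol:ska}{1}, then read off secrecy and rate from Theorem~\ref{thm:main} and obtain reliability from the union bound over the $M$ inner decoders. The paper's own proof is considerably terser---it simply asserts $\lim_{N\to\infty}\TD{P_{T^M}}{P_{(V^L[\mathcal E_K])^M}}=0$ and $\lim_{L\to\infty}\TD{P_{X^L}}{P_{\hat X^L}}=0$ without quantifying the rate of convergence---whereas you spell out the coupling argument, the sub-additivity of total variation under successive conditional samplings, and the data-processing step through $\widetilde G_M^K$, $G_L$, and $\W^N$; your acknowledgment that pinning down the per-bit sampler guarantees and their accumulation is ``the hard part'' is exactly right and is precisely what the paper leaves implicit.
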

\begin{proof}
Recall that the idea of the private channel coding scheme is to run Protocol~\hyperlink{protocol:ska}{1} backwards. Since Protocol~\hyperlink{protocol:pcc}{2} simulates the nearly deterministic bits $U^M[\mathcal{F}_J]$ at the outer encoder as described in \cite[Definition 1]{sutter12} and the almost random bits $V^L[\mathcal{E}_K^{\setC}]$ at the inner encoder as explained in \cite[Section IV]{honda12}, it follows that for large values of $L$ and $M$ the private channel coding scheme approximates the one-way secret-key scheme setup,\footnote{This approximation can be made arbitrarily precise for sufficiently large values of $L$ and $M$.} i.e.,\ $\lim_{N\to \infty}\TD{P_{T^M}}{P_{(V^L[\mathcal{E}_K])^M}}=0$ and $\lim_{L\to \infty}\TD{P_{X^L}}{P_{\hat X^L}}=0$ and, where $P_{X^L}$ denotes the distribution of the vector $X^L$ which is sent over the wiretap channel $\W$ and $P_{\hat X^L}$ denotes the distribution of Alice's random variable $\hat X^L$ in the one-way secret-key agreement setup. We thus can use the decoder introduced in \cite{arikan10} to decode the inner layer. Since we are using $M$ identical independent inner decoding blocks, by the union bound we obtain the desired reliability condition. The secrecy and rate statement are immediate consequences from Theorem~\ref{thm:main}.

\end{proof}

As mentioned after Theorem~\ref{thm:main}, to ensure reliability of the protocol, $M$ may not grow exponentially fast in $L$. 

\begin{mycor} \label{cor:ratePCC}
The rate of Protocol~\hyperlink{protocol:pcc}{2} given in Corollary~\ref{cor:Sec} can be bounded as
\begin{equation}
R \geq \max \left \lbrace 0, \Hc{X}{Z}-\Hc{X}{Y} - \frac{o(N)}{N} \right \rbrace.
\end{equation}
\end{mycor}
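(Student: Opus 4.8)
The plan is to mimic the proof of Corollary~\ref{cor:rate}, since the rate expression \eqref{eq:ratePCC} established for Protocol~\hyperlink{protocol:pcc}{2} in Corollary~\ref{cor:Sec} is identical to the rate expression \eqref{eq:ratefini} for Protocol~\hyperlink{protocol:ska}{1}. First I would start from
\begin{equation*}
R=\Hc{X}{Z}-\frac{1}{L}\Hc{V^L[\mathcal{E}_K^\setC]}{Z^L}-\frac{o(N)}{N},
\end{equation*}
and bound the conditional entropy $\Hc{V^L[\mathcal{E}_K^\setC]}{Z^L}$ from above by the unconditional entropy of $V^L[\mathcal{E}_K^\setC]$, which is in turn at most the number of bits $|\mathcal{E}_K^\setC|=L-K$, since each bit contributes at most one unit to the entropy. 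This yields $R\geq \Hc{X}{Z}-|\mathcal{E}_K^\setC|/L-o(N)/N$.

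Next I would evaluate $|\mathcal{E}_K^\setC|/L$. By definition $\mathcal{E}_K=\mathcal{D}_{\epsilon_1}^L(X|Y)$, so $|\mathcal{E}_K^\setC|=L-|\mathcal{D}_{\epsilon_1}^L(X|Y)|$, and Theorem~\ref{thm:polPheno} gives $|\mathcal{D}_{\epsilon_1}^L(X|Y)|=L(1-\Hc{X}{Y})-o(L)$, whence $|\mathcal{E}_K^\setC|/L=\Hc{X}{Y}+o(L)/L$. Substituting this back gives $R\geq \Hc{X}{Z}-\Hc{X}{Y}-o(N)/N$, absorbing the $o(L)$ term into $o(N)$ since $L\leq N$. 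Finally, since a rate is always nonnegative, one may take the maximum with $0$, which establishes $R\geq \max\{0,\Hc{X}{Z}-\Hc{X}{Y}-o(N)/N\}$.

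I expect no genuine obstacle here: the argument is essentially a line-for-line repetition of the proof of Corollary~\ref{cor:rate}, with the chain of (in)equalities \eqref{eq:tight}--\eqref{eq:laststepR} carried over verbatim. The only point requiring any care is that the rate formula in Corollary~\ref{cor:Sec} genuinely coincides with that of Theorem~\ref{thm:main}; this was already handled inside the proof of Corollary~\ref{cor:Sec} through the approximation argument showing that Protocol~\hyperlink{protocol:pcc}{2} simulates the secret-key agreement setup arbitrarily well as $L,M\to\infty$, so the rate, being the ratio $J/N$ of integers fixed by the code construction, is literally the same in the two schemes.
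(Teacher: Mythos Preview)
Your proposal is correct and follows exactly the same approach as the paper, which simply states that the proof is identical to that of Corollary~\ref{cor:rate}. The chain of (in)equalities you outline reproduces \eqref{eq:tight}--\eqref{eq:laststepR} verbatim, which is precisely what is intended.
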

\begin{proof}
The proof is identical to the proof of Corollary~\ref{cor:rate}.
\end{proof}

\subsection{Achieving the Secrecy Capacity}
Corollaries~\ref{cor:PCCmoreCapable} and \ref{cor:ratePCC} immediately imply that our private channel coding scheme achieves the secrecy capacity for the setup where $\W$ is more capable.
If we can find the optimal auxiliary random variable $V$ in \eqref{eq:secrecyCapacity}, Protocol~\hyperlink{protocol:pcc}{2} can achieve the secrecy capacity for a general wiretap channel scenario. We define a super-channel $\overline{\W}:\mathcal{V}\to \mathcal{Y}\times \mathcal{Z}$ which includes the random variable $X$ and the wiretap channel $\W$. The super-channel $\overline{\W}$ is characterized by its transition probability distribution $P_{Y,Z|V}$ where $V$ is the optimal random variable solving  \eqref{eq:secrecyCapacity}. The private channel coding scheme is then applied to the super-channel, achieving the secrecy capacity. Note that finding the optimal random variable $V$ might be difficult.

In Section~\ref{sec:discussion}, we discuss the question if it is possible that Protocol~\hyperlink{protocol:pcc}{2} achieves a rate that is strictly larger than $\max \left \lbrace 0, \Hc{X}{Z}-\Hc{X}{Y} \right \rbrace$, if nothing about the optimal auxiliary random variable $V$ is known.

\subsection{Code Construction \& Public Channel Communication}
\label{sec:ccpcc}

To start the private channel coding scheme the code construction has to be done. Therefore, the index sets $\mathcal{E}_K$ and $\mathcal{F}_J$ as defined in \eqref{eq:Eset} and \eqref{eq:Fset} need to be computed. This can be done as explained in  Section~\ref{subsec:SKA_codeConstr}. The code construction defines the input distribution $P_X$ to the wiretap channel, which should be chosen such that it maximizes the scheme's rate given in \eqref{eq:ratePCC}.

We next explain how the communication $C^M \in \mathcal{C}^M$ from Alice to Bob can be reduced such that it does not affect the rate, i.e.,\ we show that we can choose $\left|\mathcal{C}\right|=o(L)$. Recall that we defined the index set $\mathcal{E}_K:=\mathcal{D}_{\epsilon_1}^L(X|Y)$ in \eqref{eq:Eset}. Let $\mathcal{G}:=\mathcal{R}_{\epsilon_1}^{L}(X|Y)$ using the noation introduced in \eqref{eq:Rset} and $\mathcal{I}:=[L]\backslash (\mathcal{E}_K \cup \mathcal{G})=\mathcal{E}_K^{\setC}\backslash \mathcal{G}$. As explained in Section~\ref{sec:polarization}, $\mathcal{G}$ consists of the outputs $V_j$ which are essentially uniformly random, even given all previous outputs $V^{j-1}$ as well as $Y^L$, where $V^L =G_L X^L$. The index set $\mathcal{I}$ consists of the outputs $V_j$ which are neither essentially uniformly random nor essentially deterministic given $V^{j-1}$ and $Y^L$. The polarization phenomenon stated in Theorem~\ref{thm:polPheno} ensures that this set is small, i.e.,\ that $\left|\mathcal{I}\right|=o(L)$. Since the bits of $\mathcal{G}$ are almost uniformly distributed, we can fix these bits independently of the message---as part of the code construction---without affecting the reliability of the scheme for large blocklengths.\footnote{Recall that we choose $\epsilon_1=O\left(2^{-L^{\beta}} \right)$ for $\beta<\tfrac{1}{2}$, such that for $L\to \infty$ the index set $\mathcal{G}$ contains only uniformly distributed bits.} We thus only need to communicate the bits belonging to the index set $\mathcal{I}$. 

We can send the bits belonging to $\mathcal{I}$ over a seperate public noiseless channel. Alternatively, we could send them over the wiretap channel $\W$ that we are using for private channel coding. However since $\W$ is assumed to be noisy and it is essential that the bits in $\mathcal{I}$ are recieved by Bob without any errors, we need to protect them using an error correcting code. To not destroy the essentially linear computational complexity of our scheme, the code needs to have an encoder and decoder that are practically efficient. Since $\left|\mathcal{I}\right|=o(L)$, we can use any error correcting code that has a non-vanishing rate. 
For symmetric binary DMCs, polar coding can be used to transmit reliably an arbitrarily distributed message \cite[Proposition 3]{vardy11}. We can therefore symmetrize our wiretap channel $\W$ and use polar codes to transmit the bits in $\mathcal{I}$.\footnote{Note that the symmetrization of the channel will reduce its rate which however does not matter as we need a non-vanishing rate only.}


As the reliability of the scheme is the average over the possible assignments of the random bits belonging to $\mathcal{I}$ (or even $\mathcal{E}_K^{\setC}$), at least one choice must be as good as the average, meaning a reliable, efficient, and deterministic scheme must exist. However, it might be computationally hard to find this choice.

%

\section{Discussion} \label{sec:discussion} 
In this section, we describe two open problems, both of which address the question of whether rates beyond $\max \left \lbrace 0,\Hc{X}{Z}-\Hc{X}{Y}\right \rbrace$ can be achieved by our key agreement scheme, even if the optimal auxiliary random variables $V$ and $U$ are not given, i.e.,\ if we run Protocol~\hyperlink{protocol:ska}{1} directly for $X$ (instead of $U$) without making $V$ public. It may be even possible that the key agreement scheme achieves the optimal rate; no result to our knowledge implies otherwise. The two questions could also be formulated in the private coding scenario, whether rates beyond $\max \left \lbrace 0,\max_{P_X}\Hc{X}{Z}-\Hc{X}{Y}\right \rbrace$ are possible, but as positive answers in the former context imply positive answers in the latter, we shall restrict attention to the key agreement scenario for simplicity. 


\subsection{Polarization with Bob's or Eve's Side Information}
\renewcommand\themyquestion{1}
\begin{myquestion} \hypertarget{question}{}
Does for some distributions $P_{X,Y,Z}$ the rate of Protocol~\hyperlink{protocol:ska}{1} satisfy 
\begin{equation}
R>\max\left \lbrace 0,\Hc{X}{Z}-\Hc{X}{Y}\right \rbrace, \quad  \textnormal{for} \quad N \to \infty? 
\end{equation}
\end{myquestion}
An equivalent formulation of this question is whether inequality $\eqref{eq:tight}$ is always tight for large enough $N$, i.e.,\
\renewcommand\themyquestion{1'}
\begin{myquestion} \hypertarget{question1_prime}{}
 Is it possible that
\begin{equation}
\lim \limits_{L \to \infty} \frac{1}{L} \Hc{V^L[\mathcal{E}_K^{\setC}]}{Z^L} < \lim_{L\to \infty} \frac{1}{L}\left|\mathcal{E}_K^{\setC}  \right|, \quad \textnormal{for} \quad R > 0? \label{eq:question}
\end{equation}
\end{myquestion}
Using the polarization phenomenon stated in Theorem~\ref{thm:polPheno} we obtain 
\begin{equation}
\lim \limits_{L\to \infty} \frac{1}{L} \left|\mathcal{E}_K^{\setC}  \right| = \Hc{X}{Y},
\end{equation}
which together with \eqref{eq:question} would imply that $R>\max\left \lbrace 0,\Hc{X}{Z}-\Hc{X}{Y}\right \rbrace$ for $N \to \infty$ is possible. Relation \eqref{eq:question} can only be satisfied if the high-entropy set with respect to Bob's side information, i.e.,\ the set $\mathcal{E}^{\setC}_K$, is not always a high-entropy set with respect to Eve's side information. Thus, the question of rates in the key agreement protocol is closely related to fundamental structural properties of the polarization phenomenon. 

For less noisy channels $\W$ defined by $P_{YZ|X}$ (cf.\ Section~\ref{sec:defn}), these questions can be answered in the negative.  In this case we have $H(X^L|Z^L)\geq H(X^L|Y^L)$, and since $V^L[\mathcal E_K^\setC]$ is a deterministic function of $X^L$,
\begin{equation}
\lim \limits_{L\to \infty} \frac{1}{L} \Hc{V^L[\mathcal{E}_K^{\setC}]}{Z^L} \geq  \lim \limits_{L \to \infty} \frac{1}{L}\Hc{V^L[\mathcal{E}_K^{\setC}]}{Y^L} = \lim \limits_{L \to \infty} \frac{1}{L} \left|\mathcal{E}_K^{\setC}  \right|. \label{eq:equalitY}
\end{equation}
Thus, \eqref{eq:question} cannot hold. The final equality can be justified as follows. Recall that we defined  $\mathcal{E}_K:=\mathcal{D}_{\epsilon_1}^L(X|Y)$ in \eqref{eq:Eset}. Let $\mathcal{H}_{L-K}:=\mathcal{R}_{\epsilon_1}^L(X|Y)$ and $\mathcal{I}:=[L]\backslash (\mathcal{E}_K \cup \mathcal{H}_{L-K})$ such that $\mathcal{E}_K^{\setC}=\mathcal{H}_{L-K}\cup \mathcal{I}$. Recall that we can choose $\epsilon_1=O(2^{-L^{\beta_1}})$ for $\beta_1 <\frac{1}{2}$. Using the chain rule and the polarization phenomenon given in Theorem~\ref{thm:polPheno}, we obtain
\begin{align}
  \lim \limits_{L \to \infty} \frac{1}{L}\Hc{V^L[\mathcal{E}_K^{\setC}]}{Y^L} &=  \lim \limits_{L \to \infty} \frac{1}{L} \sum_{i \in \mathcal{E}_K}\Hc{V^L[\mathcal{E}_K^{\setC}]_i}{V^L[\mathcal{E}_K^{\setC}]^{i-1},Y^L}\\
 &\geq \lim \limits_{L \to \infty} \frac{1}{L} \left( \left(1-\epsilon_1\right)\left|\mathcal{H}_{L-K}\right|+\epsilon_1 \left|\mathcal{I} \right|\right)\\
 &= \lim \limits_{L \to \infty} \frac{1}{L} \left| \mathcal{E}_{K}^{\setC} \right|.
\end{align}
Using the upper bound of the entropy in terms of the alphabet size we conclude that the equality in \eqref{eq:equalitY} holds.
The fact that \eqref{eq:question} is not possible in the setup where $\W$ is less noisy accords with the one-way secret-key rate formula given in \eqref{eq:skaLessNoisy}, which excludes rates beyond $\max\left \lbrace 0, \Hc{X}{Z}-\Hc{X}{Y}\right \rbrace$.

If the answer to Question~\hyperlink{question}{1}, or equivalently to Question~\hyperlink{question1_prime}{1'}, is ``yes'', this would give some new insights into the problem of finding the optimal auxiliary random variables $U,V$ in \eqref{eq:highestRate} (and $V$ in \eqref{eq:secrecyCapacity}), which may be hard in general. 

Furthermore, a positive answer to Question~\hyperlink{question}{1} implies that we can send quantum information reliable over a quantum channel at a rate that is beyond the \emph{coherent information} using the scheme introduced in \cite{concEntagDist}. Since the best known achievable rate for a wide class of quantum channels is the coherent information, our scheme would improve this bound. Furthermore, it would be of interest to know by how much we can outperform the coherent information.\footnote{Since there exist a lot of good converse bounds for sending quantum information reliable over an arbitrary quantum channel \cite{bennett_mixed-state_1996,smith08,smith08_2}, it would be interesting to see how closely they can be met.}

\subsection{Approximately Less Noisy Super-Channel} \label{sec:diss_LN}
To state the second open problem,  consider the super-source which outputs the triple of random variables  $(V^L[\mathcal{E}_K],(Y^L, V^L[\mathcal{E}_K^{\setC}]),(Z^L, V^L[\mathcal{E}_K^{\setC}]))$. For instance, Figure~\ref{fig:SKA} consists of two super-sources. The super-source implicitly defines a super-channel $\W'$ using the conditional probability distribution of the second two random variables given the first. Then we have

\begin{myprop} \label{prop:superChannel_LN}
For sufficiently large $L$, the channel $\W'$ is approximately less noisy, irrespective of $\W$.
\end{myprop}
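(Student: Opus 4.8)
The plan is to verify the less-noisy inequality for $\W'$ up to an additive error that vanishes with $L$ and does not depend on $\W$. Write $T:=V^L[\mathcal{E}_K]$ for the input of $\W'$ and $\tilde Y:=\bigl(Y^L,V^L[\mathcal{E}_K^\setC]\bigr)$, $\tilde Z:=\bigl(Z^L,V^L[\mathcal{E}_K^\setC]\bigr)$ for its two outputs, where $V^L=G_LX^L$ and $\mathcal{E}_K=\mathcal{D}_{\epsilon_1}^L(X|Y)$ as in \eqref{eq:Eset}. I read ``approximately less noisy'' as the existence of a sequence $\delta_L\to 0$, independent of $\W$, such that $\I{U}{\tilde Y}\geq \I{U}{\tilde Z}-\delta_L$ for every $U$ with $U\markovDavid T\markovDavid(\tilde Y,\tilde Z)$; so it suffices to produce such a $\delta_L$.

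The one structural input I need is that Bob's output $\tilde Y$ nearly determines the super-channel input $T$. Writing $\mathcal{E}_K=\{i_1<\dots<i_K\}$ and using the chain rule for entropy,
\begin{equation}
\Hc{T}{\tilde Y}=\Hc{V^L[\mathcal{E}_K]}{V^L[\mathcal{E}_K^\setC],Y^L}=\sum_{k=1}^{K}\Hc{V_{i_k}}{V_{i_1},\dots,V_{i_{k-1}},\,V^L[\mathcal{E}_K^\setC],\,Y^L}.
\end{equation}
For each $k$, the conditioning variables contain all of $V^{i_k-1}$ together with $Y^L$ (indeed $\{i_1,\dots,i_{k-1}\}\cup(\mathcal{E}_K^\setC\cap[i_k-1])=[i_k-1]$), plus possibly further coordinates of $V^L$ with index above $i_k$; so ``conditioning reduces entropy'' bounds each summand by $\Hc{V_{i_k}}{V^{i_k-1},Y^L}\leq\epsilon_1$, the last inequality being the definition of $\mathcal{D}_{\epsilon_1}^L(X|Y)$. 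Hence $\Hc{T}{\tilde Y}\leq|\mathcal{E}_K|\,\epsilon_1\leq L\,\epsilon_1=:\delta_L$, and since $\epsilon_1$ may be chosen of order $2^{-L^{\beta_1}}$ with $\beta_1<\tfrac12$ (a choice independent of $\W$), $\delta_L\to 0$ uniformly in $\W$.

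It remains to combine this with the Markov chain. From $\Ic{U}{\tilde Y}{T}=0$ we get $\I{U}{T,\tilde Y}=\I{U}{T}$, and expanding the same quantity the other way, $\I{U}{\tilde Y}=\I{U}{T}-\Ic{U}{T}{\tilde Y}\geq\I{U}{T}-\Hc{T}{\tilde Y}\geq\I{U}{T}-\delta_L$. On the other hand, $U\markovDavid T\markovDavid\tilde Z$ and the data-processing inequality give $\I{U}{\tilde Z}\leq\I{U}{T}$. Chaining the two bounds yields $\I{U}{\tilde Y}\geq\I{U}{T}-\delta_L\geq\I{U}{\tilde Z}-\delta_L$, which is the asserted approximate less-noisiness.

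Almost everything here is elementary manipulation of entropies and mutual informations; the only place needing care is the chain-rule estimate for $\Hc{T}{\tilde Y}$, where one must order the indices of $\mathcal{E}_K$ so that conditioning on the earlier $\mathcal{E}_K$-coordinates together with $V^L[\mathcal{E}_K^\setC]$ genuinely refines the conditioning on $(V^{i_k-1},Y^L)$. I expect the main ``obstacle,'' such as it is, to be phrasing the notion of ``approximately less noisy'' precisely enough that this quantitative bound qualifies, and noting that the Bob/Eve asymmetry is built into the definition of $\mathcal{E}_K$ together with the fact that $V^L[\mathcal{E}_K^\setC]$ is shared publicly---which is exactly why the conclusion holds irrespective of $\W$, including cases where $\W$ itself is far from less noisy (or even has a better channel to Eve).
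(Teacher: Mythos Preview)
Your proof is correct and follows essentially the same approach as the paper: both bound $\Hc{V^L[\mathcal{E}_K]}{V^L[\mathcal{E}_K^\setC],Y^L}\leq K\epsilon_1$ via the chain rule and the definition of $\mathcal{E}_K$, then combine this with the Markov structure and data processing to obtain the approximate less-noisy inequality. The only cosmetic difference is that the paper isolates the second step as a separate lemma (Lemma~\ref{lem:LessNoisy}, stated in conditional-entropy form $\Hc{U}{\tilde Y}\leq\Hc{U}{\tilde Z}+\epsilon$), whereas you inline the equivalent argument in mutual-information form; your more explicit ordering of the indices in the chain-rule step is also a nice touch that the paper leaves implicit.
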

\begin{proof}
Using the chain rule we can write
\begin{align}
 \Hc{V^L[\mathcal{E}_K]}{V^L[\mathcal{E}_K^{\setC}],Y^L} &= \sum_{i \in \mathcal{E}_K} \Hc{V^L[\mathcal{E}_K]_i}{V^L[\mathcal{E}_K]^{i-1},V^L[\mathcal{E}_K^{\setC}],Y^L}\\
 &\leq \sum_{i \in \mathcal{E}_K}\Hc{V_i}{V^{i-1},Y^L}\\
 &\leq K \epsilon_1,
\end{align}
where the last inquality follows by definition of the set $\mathcal{E}_K$. Recall that we can choose $\epsilon_1=O\left(2^{-L^{\beta}} \right)$ for $\beta <\tfrac{1}{2}$. The polarization phenomenon stated in Theorem~\ref{thm:polPheno} ensures that $K=O\left(L \right)$. Hence, we can apply the following Lemma~\ref{lem:LessNoisy} which proves the assertion.
\end{proof}

\begin{mylem} \label{lem:LessNoisy}
If $U \markovDavid X  \markovDavid (Y,Z)$ form a Markov chain in the given order and $\Hc{X}{Y}\leq \epsilon$ for $\epsilon\geq 0$, then $\Hc{U}{Y}\leq \Hc{U}{Z}+\epsilon$ for all possible distributions of $(U,X)$.
\end{mylem}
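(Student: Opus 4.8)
The plan is to route everything through the auxiliary quantity $\Hc{U}{X}$, showing that $\Hc{U}{Y}$ can exceed it by at most $\epsilon$ while $\Hc{U}{Z}$ is never smaller than it.

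First I would expand the left-hand side with the chain rule, $\Hc{U}{Y}=\Hc{U}{X,Y}+\Ic{U}{X}{Y}$. The Markov chain $U\markovDavid X\markovDavid (Y,Z)$ implies $U\markovDavid X\markovDavid Y$, so $U$ and $Y$ are conditionally independent given $X$ and hence $\Hc{U}{X,Y}=\Hc{U}{X}$. Combined with the elementary bound $\Ic{U}{X}{Y}=\Hc{X}{Y}-\Hc{X}{U,Y}\leq \Hc{X}{Y}\leq \epsilon$ (using nonnegativity of entropy and the hypothesis), this gives $\Hc{U}{Y}\leq \Hc{U}{X}+\epsilon$.

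Next I would run the identical expansion for $Z$: by the chain rule $\Hc{U}{Z}=\Hc{U}{X,Z}+\Ic{U}{X}{Z}$, and the Markov chain $U\markovDavid X\markovDavid Z$ gives $\Hc{U}{X,Z}=\Hc{U}{X}$, so $\Hc{U}{Z}=\Hc{U}{X}+\Ic{U}{X}{Z}\geq \Hc{U}{X}$ since conditional mutual information is nonnegative. Chaining the two displays yields $\Hc{U}{Y}\leq \Hc{U}{X}+\epsilon\leq \Hc{U}{Z}+\epsilon$, which is the claim, and nothing in the argument restricted the joint law of $(U,X)$ beyond the Markov condition, so it holds for all such distributions.

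I do not expect a genuine obstacle here; the only points requiring care are bookkeeping ones: the Markov chain must be used in the direction $U\markovDavid X\markovDavid (Y,Z)$ so that conditioning on $X$ simultaneously decouples $U$ from $Y$ and from $Z$, and it is precisely the smallness of $\Hc{X}{Y}$ (rather than of $\Hc{X}{Z}$) that controls the gap — a direct data-processing bound on $\Ic{U}{Y}{}$ alone would point the wrong way, which is why the decomposition through $\Hc{U}{X}$ is the clean route.
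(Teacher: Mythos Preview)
Your proof is correct and follows essentially the same route as the paper: both arguments pass through the intermediate quantity $\Hc{U}{X}$, first bounding $\Hc{U}{Y}\leq \Hc{U}{X}+\epsilon$ via the chain rule and the hypothesis $\Hc{X}{Y}\leq\epsilon$, and then using the Markov structure (equivalently, data processing) to get $\Hc{U}{X}\leq \Hc{U}{Z}$. The only cosmetic difference is that the paper writes the first step via the joint-entropy chain rule $\Hc{U}{Y}\leq \Hc{U,X}{Y}=\Hc{X}{Y}+\Hc{U}{X,Y}$ and then drops the $Y$ in $\Hc{U}{X,Y}$ by ``conditioning reduces entropy,'' whereas you phrase the same step through $\Ic{U}{X}{Y}$ and invoke the Markov chain to get the equality $\Hc{U}{X,Y}=\Hc{U}{X}$.
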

\begin{proof}
Using the chain rule and the non-negativity of the entropy we can write
\begin{align}
 \Hc{U}{Y}&\leq \Hc{U}{Y}+\Hc{X}{Y,U}\\
&=\Hc{U,X}{Y}\\
&=\Hc{X}{Y}+\Hc{U}{X,Y}\\
&\leq \epsilon + \Hc{U}{X} \label{eq:condred}\\
&\leq \epsilon +\Hc{U}{Z}. \label{eq:dataproc}
\end{align}
Inequality~\eqref{eq:condred} follows by assumption and since conditioning reduces entropy. The final inequality uses the data processing inequality.
\end{proof}

Proposition~\ref{prop:superChannel_LN} and Lemma~\ref{lem:LessNoisy} imply that the DM-WTC $\W'$ induced by the super-source described above is almost less noisy. More precisely we have for $\beta<\tfrac{1}{2}$ and  $\xi=O\left(L 2^{-L^{\beta}} \right)$
\begin{equation}
\Hc{T}{V^L[\mathcal{E}_K^{\setC}],Y^L} \leq \Hc{T}{V^L[\mathcal{E}_K^{\setC}],Z^L} + \xi, \label{eq:epsilonLN}
\end{equation}
for all possible distributions of $T$, where $T \markovDavid V^L[\mathcal{E_K}] \markovDavid ((Y^L, V^L[\mathcal{E}_K^{\setC}]),(Z^L, V^L[\mathcal{E}_K^{\setC}]))$ and $\left|\mathcal{T}\right| \leq K$. Following the proof of Corollary~\ref{cor:LN_SKA}---using \eqref{eq:epsilonLN} in \eqref{eq:IneqLN}---we obtain the one-way secret-key rate of the super-source as
\begin{align}
&\frac{1}{L}\SK{V^L[\mathcal{E}_K]}{Y^L,V^L[\mathcal{E}_K^{\setC}]}{Z^L,V^L[\mathcal{E}_K^{\setC}]}\nonumber \\
&\hspace{30mm}= \frac{1}{L}\left(\Hc{V^L[\mathcal{E}_K]}{Z^L,V^L[\mathcal{E}_K^{\setC}]}-\Hc{V^L[\mathcal{E}_K]}{Y^L,V^L[\mathcal{E}_K^{\setC}]} +\xi \right)\\
&\hspace{30mm}=\frac{1}{L}\left(\Hc{V^L[\mathcal{E}_K]}{Z^L,V^L[\mathcal{E}_K^{\setC}]} \right) - \frac{o(N)}{N}\\
&\hspace{30mm}=R. \label{eq:rateSuper}
\end{align}
The second equation follows by definition of the set $\mathcal{E}_K$ and \eqref{eq:rateSuper} is according to \eqref{eq:rateForLater}. We thus conclude that the one-way secret-key agreement scheme introduced in Section~\ref{sec:ska} always achieves the one-way secret-key rate for the super-source as defined above.  
This raises the question of when the super-source has the same key rate as the original source, i.e.,\ how much is is lost in the first layer of our key agreement scheme. 
\renewcommand\themyquestion{2}
\begin{myquestion}  \hypertarget{question:SKA_SuperSource}{}
 For what conditions does $\tfrac{1}{L}\SK{V^L[\mathcal{E}_K]}{Y^L,V^L[\mathcal{E}_K^{\setC}]}{Z^L,V^L[\mathcal{E}_K^{\setC}]}=\SK{X}{Y}{Z}$ hold? 
\end{myquestion}
Having $\tfrac{1}{L}\SK{V^L[\mathcal{E}_K]}{Y^L,V^L[\mathcal{E}_K^{\setC}]}{Z^L,V^L[\mathcal{E}_K^{\setC}]}=\SK{X}{Y}{Z}$ implies that Protocol~\hyperlink{protocol:ska}{1} achieves the one-way secret-key rate without knowing anything about the optimal auxiliary random variables $V$ and $U$. If $\W$ is less noisy, Corollary~\ref{cor:LN_SKA} ensures that $\tfrac{1}{L}\SK{V^L[\mathcal{E}_K]}{Y^L,V^L[\mathcal{E}_K^{\setC}]}{Z^L,V^L[\mathcal{E}_K^{\setC}]}=\SK{X}{Y}{Z}$ must be satisfied. For other scenarios Question~\hyperlink{question:SKA_SuperSource}{2} is currently unsolved.   

For the setup of private channel coding, following the proof of Corollary~\ref{cor:PCCmoreCapable} using \eqref{eq:epsilonLN} shows that the secrecy capacity of the super-channel $\W'$ is
\begin{align}
C_s(\W')&= \frac{1}{L}\left(\Hc{V^L[\mathcal{E}_K]}{Z^L,V^L[\mathcal{E}_K^{\setC}]}-\Hc{V^L[\mathcal{E}_K]}{Y^L,V^L[\mathcal{E}_K^{\setC}]} + \xi \right)\\
&=\frac{1}{L}\left(\Hc{V^L[\mathcal{E}_K]}{Z^L,V^L[\mathcal{E}_K^{\setC}]} \right) - \frac{o(N)}{N}\\
&=R. \label{eq:rateSuperChannel}
\end{align}
The scheme introduced in Protocol~\hyperlink{protocol:pcc}{2} hence achieves the secrecy capacity for the channel $\W'$ irrespective of the channel $\W$. This raises the question when the super-channel and the original channel have the same secrecy capacity.
\renewcommand\themyquestion{2'}
\begin{myquestion}  \hypertarget{question:pcc}{}
 Under what conditions does $C_s(\W')=C_s(\W)$ hold?
\end{myquestion}
$C_s(\W')=C_s(\W)$ being valid implies that Protocol~\hyperlink{protocol:pcc}{2} achieves the secrecy capacity of $\W$ without having knowledge about the optimal auxiliary random variable $V$. If $\W$ is more capable, according to Corollary~\ref{cor:PCCmoreCapable} $C_s(\W')=C_s(\W)$ must hold. For other channels, Question~\hyperlink{question:pcc}{2'} has not yet been resolved. 


\subsection{Conclusion}
We have constructed practically efficient protocols (with complexity essentially linear in the blocklength) for one-way secret-key agreement from correlated randomness and for private channel coding over discrete memoryless wiretap channels. Each protocol achieves the corresponding optimal rate. Compared to previous methods, we do not require any degradability assumptions and achieve strong (rather than weak) secrecy.

Our scheme is formulated for arbitrary discrete memoryless wiretap channels. Using ideas of \c Sa\c so\u glu et al.\ \cite{sasoglu09} the two protocols presented in this paper can also be used for wiretap channels with continuous input alphabets.


 \section*{Acknowledgments}
The authors would like to thank Alexander Vardy for useful discussions.
This work was supported by the Swiss National Science Foundation (through the National Centre of Competence in Research `Quantum Science and Technology' and grant No.~200020-135048) and by the European Research Council (grant No.~258932).




\bibliography{./bibtex/header,./bibtex/bibliofile}





\end{document}